\newtheorem{lemma}{Lemma}
\newtheorem{definition}{Definition}
\newtheorem{remark}[lemma]{Remark}
\newtheorem{theorem}[lemma]{Theorem}
\newtheorem{corollary}[lemma]{Corollary}
\newtheorem*{Proposition*}{Proposition}
\newcommand{\dE}{\mathbb {E}}
\newcommand{\dP}{\mathbb{P}}
\newcommand{\setR}{\mathbb {R}}
\newcommand{\cE}{\mathcal {E}}
\newcommand{\1}{1\!\!{\sf I}}
\newcommand{\II}{\1}
\renewcommand{\log}{\ln}
\title{Rapid Mixing of Local  Graph Dynamics}
\author{Laurent Massoulié and Rémi Varloot\\ Inria, MSR-Inria Joint Centre} 
\begin{document}

\maketitle

\begin{abstract}
Graph dynamics arise naturally in many contexts. For instance in peer-to-peer networks, a participating peer may replace an existing connection with one neighbour by a new connection with a neighbour's neighbour. Several such local rewiring rules have been proposed to ensure that peer-to-peer networks achieve good connectivity properties (e.g. high expansion) in equilibrium. However it has remained an open question whether there existed such rules that also led to fast convergence to equilibrium.
In this work we provide an affirmative answer: We exhibit a local rewiring rule that converges to equilibrium after each participating node has undergone only a number of rewirings that is poly-logarithmic in the system size. The proof involves consideration of the whole isoperimetric profile of the graph, and may be of independent interest.

 \end{abstract}

\section{Introduction}

With the growing interest for real-world networks, the study of graph dynamics has attracted massive attention. This is especially true in domains such as social networks, the Internet of Things, or wireless sensor networks, which are characterized by ever-shifting topologies. Whereas a lot of attention has been focused on asymptotic properties such dynamics could entail, like degree distribution, connectivity, density and more, another rising trend is the study of the transition period in itself. One notable metric of interest in this approach is the convergence rate: how long does it take for such dynamics to reach a stationary regime?

In this paper, the question is addressed as follows: consider a regular graph with poor connectivity. Introduce a \emph{local dynamic}, which modifies the edge set within a finite neighborhood at each iteration. Can such a process reach stationarity in no more than a polylogarithmic number of modifications per vertex?

\subsection{Related Work}
The properties of graphs resembling real-life networks have been thoroughly studied in works such as \cite{Durrett:07}. One of the more notorious models that has aroused from such studies is the Barab\'asi-Albert model for graphs with power law degree distribution \cite{Albert_statisticalmechanics}. It has also been shown that regular random graphs make good expanders, which yield convenient features, such as the small-world property \cite{KimMedard,Shi99modelsof}.

The dynamics of graphs themselves have been studied in papers such as \cite{bhamidi2011mixing}, which focuses on the convergence of exponential random graphs, or \cite{Papagelis2015refining}, which looks at how to design such dynamics so as to alter the overall graphs structure in a given way. Another concrete application which have received notable attention is the construction p2p networks  \cite{cooper:inria-00070627, Cooper:2009:FMC:1582716.1582742, Feder:2006:LSM:1170136.1170489}.

Regarding local dynamics, different approaches have emerged. \cite{schweinsberg2002bound} delves into the matter of mixing cladograms, for instant, and \cite{jerrum1989approximating} looks at matchings in bipartite graphs. Finally, the best known result concerning the convergence rate of a family of expanders via local edge modifications is given in \cite{zhu2015expanders}, which establishes a mixing time of $O(N^2d^2\sqrt{\log N})$ for $d$-regular graphs over $N$ vertices with $d$ of order $O(\log N)$, that is to say a quasi-linear number of updates per vertex.

\subsection{Our Contribution}
For the remainder of this paper, we consider graphs over vertex set $[N] = \{1,\ldots,N\}$, where $N$ is a positive integer. All asymptotic results will be in respect to $N$. For simplicity, we write $\hbox{polylog}(N)$ to designate $O(\log^k N)$ for some constant positive integer $k$.

Consider the following setting. The vertices in $[N]$ are connected by edges of distinct types. First, a fixed set of edges $(i,i+1)$ constitutes a cycle ($N+1\equiv 1$). Second, each node $n\in [N]$ maintains two pointers, one red and one blue, with respective destinations $b_n$, $r_n$ in $[N]$. These are such that each node $n$ is the destination of only one blue pointer and only one red pointer. It is assumed that an edge $(n,b_n)$ corresponding to such a pointer enables both nodes to communicate directly. 

The continuous-time dynamic then proceeds as follows. The graph evolves through alternating phases. During each phase, only the pointers of a given color evolve, while pointers of the other color are kept fixed.
For the blue phases, blue pointers move along the symmetric graph $G$ formed by the union of the cyclic edges and the unoriented edges $(n,r_n)$ formed by the red shortcuts. Note that $G$ is a 4-regular multigraph. For the red phases, the roles of blue and red pointers are swapped.

Formally, the dynamic is defined as follows. For each edge $e=(i,j)$ of $G$, the two nodes $n,m\in [N]$ such that $b_n=i$ and $b_m=j$ will swap their pointers at the expiration of a random timer whose duration is exponentially distributed with mean 1. These timers are independent across all edges, and are reset independently of everything else upon expiration. Such a process has been studied in the literature, where it is known as the interchange process. See for instance Jonasson \cite{Jonasson_2012} or N. Berestycki \cite{berestycki}, where the discrete time version  of this process is analyzed.

Our main result is then as follows:
\begin{theorem}\label{theorem_1}
	Let $T=\ln(N)^a$ where $a>8$ is a constant. Then after $O(\ln(N))$ phases of length $T$, the graph is, with high probability, distributed so that the sets of blue and red pointers constitute permutations of $[N]$ uniformly and independently distributed over the symmetric group $S_N$.
\end{theorem}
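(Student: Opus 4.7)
The plan is to prove the theorem by induction on the phase number, propagating estimates on the isoperimetric profile of the graph formed by the cycle together with the currently-fixed color of shortcuts. Concretely, I would let $\Phi_j(k) = \min_{|S|\leq k} |\partial S|/|S|$ denote the isoperimetric profile of the graph $G$ at the start of the $j$-th phase, and track the evolution of the function $k \mapsto \Phi_j(k)$ across phases.

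First, I would establish a quantitative mixing estimate for the interchange process on an arbitrary $4$-regular multigraph $G$ on $N$ vertices, expressed in terms of the full profile $\Phi_G$ rather than just the spectral gap. Starting from the Caputo--Liggett--Richthammer theorem, which identifies the spectral gap of the interchange process with that of the simple random walk on $G$, I would refine it to a scale-by-scale statement: after time $t$, the total variation distance between the induced permutation and the uniform distribution on $S_N$ is controlled by a quantity of the form $\sum_{k \leq N/2} \exp\!\bigl(-c\, t\, \Phi_G(k)^2 / k\bigr)$, capturing the fact that each ``scale'' $k$ requires its own mixing time proportional to $k/\Phi_G(k)^2$. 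Such estimates can be derived from a Lovász--Kannan-style integration applied either to the Dirichlet form of the interchange process, or via evolving sets combined with Morris--Peres-type bounds.

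Second, I would show that a phase of length $T$ produces a structurally better graph for the next phase. For a set $S$ with $|S|=k$, its boundary in the new graph is (up to the two cycle edges) the number of currently-swapped pointers with exactly one endpoint in $S$; this is a sum of dependent indicators with mean $\Theta(k(1-k/N))$. Even when the permutation has not mixed globally, the ``local'' observables $\mathbf{1}_{b_n \in S}$ equilibrate on the much faster timescale $k/\Phi_G(k)^2$, so a careful concentration argument (using the negative association that the interchange process inherits, together with a union bound over $O(\binom{N}{k})$ sets tempered by the fact that only $k \leq N/2$ need be checked and that the bad sets are atypical) gives with high probability an explicit improvement $\Phi_{j+1}(k) \geq f(\Phi_j(k))$. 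The function $f$ would be chosen so that each phase multiplies $\Phi(k)$ by at least a constant factor, up to a saturation value of $\Omega(1)$.

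Third, I would iterate: after $O(\log N)$ phases, $\Phi_j(k)\geq c>0$ uniformly in $k$, so that one final pair of phases of length $T=\log^a N$ suffices to fully mix both colors, since the interchange process on a graph with $\Omega(1)$ spectral gap mixes in time $O(\log N)$ and the resulting permutation, once uniform, yields a random graph for the other color that also has spectral gap $\Omega(1)$ with high probability. The main obstacle is the second step: quantifying the improvement in $\Phi$ produced by a phase during which the interchange process has \emph{not} globally mixed. The argument must show that local statistics about pointer endpoints equilibrate simultaneously at every scale, which requires both a refined mixing bound and a delicate concentration inequality tracking dependencies between shortcuts. The exponent $a>8$ in the phase length absorbs the worst-case loss in the recursion: a phase must be long enough to improve $\Phi$ at the hardest scale, and the sum over scales and over phases contributes polylogarithmic factors whose total must remain below $T$.
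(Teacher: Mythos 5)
Your high-level outline --- track the isoperimetric profile across phases, use negative association plus concentration plus a union bound to propagate the profile forward, then invoke a mixing result for the interchange process once expansion is established --- is indeed the shape of the paper's argument. But there are three concrete gaps that would sink the proof as you have sketched it.

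First, the union bound over $O\bigl(\binom{N}{k}\bigr)$ sets cannot be paid for by the available concentration. For a set of size $k$, the expected cut $\dE|F(S,\overline{S})|$ is $\Theta(k)$ at best, so negative-association Chernoff bounds give deviation probability $\exp(-\Theta(k))$, whereas $\binom{N}{k}=\exp\bigl(\Theta(k\log(N/k))\bigr)$. The hand-wave ``bad sets are atypical'' does not close this gap. The paper's crucial device is to enumerate sets $S$ by the number $\ell$ of contiguous cycle arcs they comprise: there are only $N^{2\ell}$ such sets, and every such set already has $2\ell$ boundary edges from the cycle, so only the small-$\ell$ regime is dangerous and there the enumeration is cheap. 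This is precisely why the fixed cycle is built into the model, and it is the reason the achieved expansion saturates at $\gamma=\log^{-\beta}N$ rather than at a constant as you claim: the factor $\gamma^{-1}=\log^\beta N$ appearing in the mean bound $\dE|F|\ge (2\gamma)^{-1}\min(\gamma|S|,2d)$ is exactly what is spent to beat the $N^{2\ell}$ union bound, and it forces the w.h.p.\ conclusion down to $\min(\gamma|S|,2d)$. The paper's iteration therefore tracks a specific two-parameter profile $\phi_k\ge\min(\gamma,d/k)$ with $\gamma$ fixed and $d$ doubling each phase, which is not the same as a pointwise multiplicative improvement $\Phi_{j+1}(k)\ge f(\Phi_j(k))$.

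Second, your final mixing step does not deliver a polylogarithmic time. The Caputo--Liggett--Richthammer identity gives the spectral gap of the interchange process, but the mixing time bound that follows is of order $(\log N!)/\mathrm{gap}=\Theta(N\log N/\mathrm{gap})$, which is $\widetilde{\Theta}(N)$ even on a constant-degree expander with $\Omega(1)$ gap. Your proposed scale-by-scale estimate $\sum_{k\le N/2}\exp(-ct\,\Phi_G(k)^2/k)$ has the same defect: the $k=N/2$ term alone requires $t=\Omega(N)$. The paper instead uses a canonical-path bound specific to the interchange process (Berestycki, Thm.\ 4.6), combined with a randomized path construction on the $\gamma$-expander giving $\Delta=O(\log N/\gamma^2)$ and congestion $K=O(N\log^2 N/\gamma^2)$, yielding mixing by time $O(\log^4 N/\gamma^4)=\mathrm{polylog}(N)$. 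Some such path or comparison argument is essential; the spectral gap alone is not enough.

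Third, your key engine for the ``improvement per phase'' step --- that local observables $\mathbf{1}_{b_n\in S}$ equilibrate on timescale $k/\Phi_G(k)^2$ --- is in the right spirit but has the wrong scaling and no stated proof. The paper proves a dedicated ``partial spread'' theorem (Theorem~\ref{theorem_2}): a Cheeger-type statement in terms of a single $\phi_k$, showing that for any $|S|\le k$ the random-walk mass satisfies $\sum_{i\in S}\pi_i(t)\le |S|/(k+1)+\sqrt{k+1}\,e^{-t\phi_k^2/(2\Delta)}$, so the relevant timescale is $\Delta\log k/\phi_k^2$, not $k/\phi_k^2$. The proof requires collapsing the graph outside the $k$ heaviest sites and re-proving a Cheeger bound for the collapsed graph with the \emph{original} max-degree $\Delta$ --- a non-trivial step that the appeal to Lov\'asz--Kannan or evolving sets does not supply. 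In addition, large sets ($k\gtrsim d/\gamma$) require a separate coupling to an auxiliary jump process on the cycle, which your outline does not anticipate.
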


\begin{corollary}
	The above-described process produces with high probability an expander after each node has undergone a number of local connectivity modifications that is polylogarithmic in $N$.
\end{corollary}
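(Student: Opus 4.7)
My plan is to combine Theorem~\ref{theorem_1} with two routine ingredients: (a) a bound on the number of local modifications each vertex experiences, and (b) the fact that a random graph with the equilibrium pointer distribution is, with high probability, an expander.

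\textbf{Counting modifications.} Fix a node $n$ and consider the number of times its blue pointer is swapped during a blue phase of length $T=\ln(N)^a$. At any instant $t$ its destination $b_n(t)$ is some vertex $v\in[N]$, and its pointer changes precisely when one of the edges of $G$ incident to $v$ fires. Since $G$ is $4$-regular, the intensity of this counting process is constant equal to $4$, so the number of swaps affecting $n$ in a single phase is distributed as $\Poi(4T)$. Summing over the $O(\ln N)$ phases of Theorem~\ref{theorem_1} and applying a Chernoff bound followed by a union bound over $n\in[N]$, one obtains that with high probability no vertex undergoes more than $O(\ln(N)^{a+1})$ rewirings, which is polylogarithmic.

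\textbf{Expansion at equilibrium.} By Theorem~\ref{theorem_1}, the blue and red pointers are, with high probability, two independent uniformly random permutations $\sigma_b,\sigma_r$ of $[N]$, so the final graph contains the Hamiltonian cycle together with the edge sets of $\sigma_b$ and $\sigma_r$. A standard first-moment computation going back to Broder--Shamir shows that the union of two uniformly random permutations on $[N]$ is an expander with high probability, and adding the Hamiltonian cycle only improves expansion. Concretely, I would fix $S\subseteq [N]$ with $|S|\le N/2$, bound $\PP(|\partial S|\le\veps |S|)$ by counting the number of ways to match $S$ to itself in each permutation, and take a union bound over $S$; for a small absolute constant $\veps>0$ the total is $o(1)$.

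The real difficulty of the paper lies not in this corollary but in Theorem~\ref{theorem_1}, which supplies the rapid convergence to the uniform distribution of pointers. The only mildly subtle point in the argument above is that the pointer-swap process is state-dependent --- who swaps with whom depends on the current layout of pointers --- so an ill-chosen parametrization could entangle the swap counts of different nodes. The $4$-regularity of $G$ circumvents this entirely by making the \emph{per-node} modification rate deterministic, which is what permits the clean $\Poi(4T)$ bound used in the counting step.
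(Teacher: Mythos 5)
Your proposal follows essentially the same two-step route as the paper: bound per-node rewirings by a Poisson variable via the constant transition rate, apply Chernoff plus a union bound, then invoke the classical fact that a ring plus random permutation edges is an expander. One small undercount: the rate of modifications \emph{local to} a node $n$ is $8$, not $4$ --- you correctly account for the rate at which the pointer issued from $n$ moves (degree of $b_n$ in $G$ is $4$), but $n$'s connectivity also changes when the pointer \emph{ending at} $n$ moves, which fires at the additional rate $4$ given by the degree of $n$ itself in $G$. This halving does not affect the polylogarithmic conclusion, so the argument still goes through; the paper also notes that the expander property is in fact already available from Corollary~\ref{corollary_2} in the proof of Theorem~\ref{theorem_1}, so your Broder--Shamir equilibrium argument is a valid but somewhat redundant alternative.
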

\begin{proof}
	By time $\tau:=TO(\ln(N))=O(\ln(N)^{a+1})=\hbox{polylog}(N)$, a given node $n\in [N]$ has seen under these dynamics a number of connectivity modifications that is at most a Poisson random variable with mean $8\tau$. Indeed, at any given time, the rate at which a transition might occur is 8, corresponding to the rate at which the pointer $b_n$ (for the blue phase) issued from $n$ moves (equal to the number of edges of $G$ adjacent to $b_n$, i.e. 4) plus the rate at which the blue pointer ending at $n$ moves (also 4). 
	
	The probability that this number $M_n$ of connectivity modifications exceeds $16\tau$ is then, by Chernoff's bound for deviations of Poisson random variables from their mean, bounded by
	$$
	\dP(M_n\ge 16\tau)\le e^{-8\tau h(16\tau/(8\tau))}=e^{-8\tau h(2)},
	$$
	where $h(x):=x\ln(x)-x+1$ is the Cramér transform of a unit mean Poisson random variable. Since $\tau$ is at least of order $\log^a N$ with $a > 8$, the last term is $o(1/N)$. Thus the probability that at least one node $n\in [N]$ undergoes more than $16 \tau$ local modifications by time $\tau$ is, by the union bound, no more than $N o(1/N)=o(1)$. 
	
	The fact that the resulting graph is an expander is given as part of the proof of Theorem~\ref{theorem_1}, though the fact that such constructions form expanders is a classical result in itself \cite{Bollobas98a, Friedman:1989:SER:73007.73063}.
\end{proof}

\section{Proof strategy}

To proceed, we first introduce some definitions.
\begin{definition}
 For each $k\in[N/2]$, the $k$-th isoperimetric constant $\phi_k(G)$ of a graph $G$ with vertex set $V(G)=[N]$ is
\begin{equation}
\phi_k(G):=\min_{S\subset [N],|S|\le k}|E_G(S,\overline{S})|,
\end{equation}
where $\overline{S}$ denotes the complement $[N]\setminus S$ of a set $S$, $E_G(S,\overline{S})$ denotes the set of edges in $G$ between $S$ and its complement, and $|\cdot |$ denotes the cardinality of a set. 

The collection $\{\phi_k(G)\}_{k\in[N/2]}$ of isoperimetric constants of graph $G$ constitutes its {\bf isoperimetric profile}. 
\end{definition}
We shall omit the argument $G$ in these quantities when it is clear from context.

Our  proof strategy is then as follows.

We shall control the evolution of the  isoperimetric profile of the graph along which pointers move from one phase to the next, establishing lower bounds on this profile in an iterative manner. Specifically, we show the following
\begin{lemma}\label{lemma_1}
Let $\beta > 1$ be a constant such that $a> 2\beta+1$, and $\gamma = \log^{-\beta} N$, and let $d$ be an integer. Suppose that, at the end of a red phase, the graph $G$ consisting of the red edges and the ring is such that, for all $S\subset [N]$ with $|S|\le N/2$, 
$$
|E(S,\bar{S})| \geq \min(\gamma |S|, d),
$$ 
or in other words, that $\phi_k(G)\ge \min(\gamma, d/k)$ for all $k\le N/2$.

Then at the end of the following blue phase, with probability at least $1-o(1/N)$, the graph $G'$ consisting of the blue edges and the ring is such that, for all $S$ in $[N]$ with $|S|\le N/2$, 
$$|E_{G'}(S,\bar{S})| \geq \min(\gamma |S|, 2d),$$
or in other words, $\phi_k(G')\ge \min(\gamma, 2d/k)$ for all $k\le N/2$.
\end{lemma}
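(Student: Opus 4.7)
I would show that during the blue phase, the interchange process on $G$ randomises the blue pointers enough that the resulting graph $G'$ inherits the desired isoperimetric profile with probability $1 - o(1/N)$. First, I would exploit the given profile of $G$ to bound the spectral gap of the continuous-time simple random walk on $G$. Cheeger's inequality applied to $\phi_{N/2}(G) \geq \gamma$ (which follows from the hypothesis in the interesting regime $d \leq \gamma N/2$) yields $\lambda(G) = \Omega(\gamma^2) = \Omega(\log^{-2\beta} N)$ on the $4$-regular graph $G$. Since a single blue pointer $b_n(t)$ marginally performs this random walk under the interchange dynamics, the choice $T = \log^a N$ with $a > 2\beta + 1$ gives $T\lambda(G) \gg \log N$ and thus total variation distance $o(N^{-C})$ from uniform on $[N]$ for any constant $C$. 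By the Caputo--Liggett--Richthammer theorem (Aldous' conjecture), the same spectral gap $\Omega(\gamma^2)$ governs the $p$-particle exclusion process on $G$ for every $p \geq 1$.

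Second, for a fixed $S$ with $|S|=k$ I would control the blue boundary
\[
Y_S(T) \;=\; 2(k - Z_S(T)), \qquad Z_S(T) \;=\; |\{n \in S : b_n(T) \in S\}|.
\]
Single-particle mixing yields $\mathbb{E}[Z_S(T)] \approx k^2/N$, so $\mathbb{E}[Y_S(T)] \approx 2k(1-k/N) \geq k$, already far above the target $\min(\gamma k, 2d)$. Two-particle mixing gives $\mathrm{Var}(Z_S(T)) = O(k^2/N)$. The key estimate is sharper: high-order spectral bounds for $p$-particle exclusion should yield $\mathbb{E}[(Z_S - \mathbb{E} Z_S)^p]^{1/p} = O(\sqrt{p}\, k/\sqrt{N})$ for $p$ up to order $\log N$, translating into an exponential lower-tail bound for $Y_S(T)$.

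Third, I would union-bound over $S \subset [N]$ with $|S| \leq N/2$, splitting by size. For $|S| \leq 2/\gamma$ the ring alone contributes at least $2 \geq \gamma|S|$ to the boundary, so the profile bound holds deterministically (and this trivially covers any case with $2d \leq 2$). For $|S| > 2/\gamma$, the high-moment tail bound from the previous stage, combined with dyadic summing over set-size ranges, yields total failure probability $o(1/N)$.

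The main obstacle is the second stage. Naive Chebyshev from $\mathrm{Var}(Z_S) = O(k^2/N)$ is far too weak to survive the union bound over $\binom{N}{k}$ sets for $k$ in the regime $k \sim \log^\beta N$, where $\binom{N}{k}$ already exceeds $N^{\log^\beta N}$. Closing this gap requires high-order moment control derived from the multi-particle spectral gap, which is precisely what the hypothesis $a > 2\beta + 1$ provides by ensuring $T\lambda(G)/\log N \gg 1$, giving enough mixing headroom for moments of order $\log N$ to be controlled. This step, engaging the full isoperimetric profile of $G$ rather than merely its Cheeger constant, is where I expect the bulk of the technical work to lie.
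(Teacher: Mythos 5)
Your approach breaks down at the very first step: it is not true that $\phi_{N/2}(G)\geq\gamma$ under the hypothesis. The hypothesis gives $\phi_{N/2}(G)\geq\min(\gamma,2d/N)$, and since Lemma~\ref{lemma_1} is applied iteratively starting from $d=2$, in the early phases one only has $\phi_{N/2}(G)\geq O(d/N)$, which is polynomially small in $N$. (You have the regime backwards: $d\leq\gamma N/2$ forces $\min(\gamma,2d/N)=2d/N$, which is \emph{weaker} than $\gamma$.) Consequently the single-particle random walk on $G$ has spectral gap only $\Omega((d/N)^2)$, the walk does \emph{not} mix in $T=\mathrm{polylog}(N)$ time, and the estimate $\mathbb{E}[Z_S(T)]\approx k^2/N$ is unavailable. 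Invoking Caputo--Liggett--Richthammer does not rescue this, because the $p$-particle gap cannot exceed the $1$-particle gap. The whole point of the paper is that full mixing is not available; the paper instead proves a ``partial spread'' theorem (Theorem~\ref{theorem_2}, proved in Section~\ref{sec:partial_spread}) showing that after time $t$ the walk's mass on any set of size $\leq k$ drops to roughly $|S|/(k+1)$ with decay rate governed by $\phi_k(G)$ alone, not by $\phi_{N/2}(G)$. For $k$ up to about $d/\gamma$ the hypothesis gives $\phi_k(G)\geq\gamma$, so partial spread up to that scale is fast even when global mixing is slow. Lemma~\ref{lemma_2} then establishes the \emph{correct} mean bound $\mathbb{E}|F(S,\overline{S})|\geq(2\gamma)^{-1}\min(\gamma|S|,2d)$ --- a factor $(2\gamma)^{-1}=\Theta(\log^\beta N)$ above the target, but far below the value $\approx k$ that full mixing would give --- using Theorem~\ref{theorem_2} for small $|S|$ and a separate coupling argument for large $|S|$.

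You correctly identify that the union bound over $\binom{N}{k}$ sets is the second obstacle, but high-order moment control cannot close it either. For $k$ of order $1/\gamma=\log^\beta N$ the target is $\min(\gamma k,2d)=\Theta(1)$, the mean is $\Theta(k)$, and a Chernoff-type lower-tail bound from negatively associated (or even independent) Bernoullis gives failure probability $\exp(-\Theta(k))$; but $\log\binom{N}{k}=\Theta(k\log N)$, which is larger by a factor $\log N$. No amount of moment sharpening closes a gap that is already there for i.i.d.\ variables. The paper's fix is combinatorial rather than probabilistic: write $|E_{G'}(S,\overline S)|=|F(S,\overline S)|+2\ell$ where $\ell$ is the number of contiguous ring-arcs composing $S$; for sets with large $\ell$ the ring alone supplies the required boundary, and the number of sets of size $k$ with exactly $\ell$ arcs is at most $N^{2\ell}$ --- small enough to union-bound against the Chernoff tail (obtained via the strong Rayleigh / negative association property of the exclusion process, as in the paper). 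Your observation that the ring handles $|S|\leq 2/\gamma$ is the $\ell=1$ germ of this idea, but it needs to be pushed to all $\ell$ to make the union bound go through.
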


To prove this, we first show a stronger lower bound on the average $\dE |E_{G'}(S,\overline{S})|$, namely we establish the following

\begin{lemma}
\label{lemma_2}
Under the assumptions of Lemma~\ref{lemma_1}, for all $S\subset [N]$ with $|S|\le N/2$, denoting by $F(S,\overline{S})$ the set of blue edges at the end of the blue phase between $S$ and $\overline{S}$,  it holds that 
\begin{equation}
\dE |F(S,\overline{S})|\ge \frac{1}{2\gamma} \min(\gamma|S|,2d). 
\end{equation}
\end{lemma}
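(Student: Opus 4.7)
The starting observation is that, under the interchange dynamics, the trajectory $t\mapsto b_n(t)$ of each individual blue pointer has the marginal law of a continuous-time random walk (CTRW) on $G$ at rate one per edge, started from $b_n(0)$: when $b_n(t)=v$ and an edge incident to $v$ fires, $b_n$ simply jumps to the other endpoint. Let $p_T$ denote the heat semigroup of this CTRW at time $T$, and set $S':=b_0(S)$, a subset of $[N]$ of size $|S|$. Since the blue edge contributed by node $n$ is $(n,b_n(T))$, which crosses the cut iff $\mathbf{1}_S(n)\neq\mathbf{1}_S(b_n(T))$, taking expectations and using $\sum_v(p_T\mathbf{1}_S)(v)=|S|$ (symmetry of $p_T$ with respect to the uniform measure) yields the clean formula
\begin{equation}\label{eq:planformula}
\dE|F(S,\overline S)|=2|S|-2\langle\mathbf{1}_{S'},\,p_T\mathbf{1}_S\rangle.
\end{equation}

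The task will then be to upper-bound $\langle\mathbf{1}_{S'},p_T\mathbf{1}_S\rangle$ uniformly in $S'$ (i.e.\ under an adversarial initial permutation $b_0$), using the isoperimetric hypothesis $|E(W,\overline W)|\ge\min(\gamma|W|,d)$. I would split by the size of $S$. When $|S|\le 2d/\gamma$, the restriction of $G$ to $S$ has a Cheeger constant $\ge\gamma$ (by the isoperimetric bound applied to subsets of $S$), so the Dirichlet eigenvalue on $S$ is $\gtrsim\gamma^2$; this drives exponential decay of $p_t\mathbf{1}_S$ toward its stationary value on the time scale $1/\gamma^2=\log^{2\beta}N$, much smaller than $T=\log^a N$ since $a>2\beta+1$. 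Consequently $\langle\mathbf{1}_{S'},p_T\mathbf{1}_S\rangle\le 3|S|/4$ and $\dE|F|\ge|S|/2$. When $|S|>2d/\gamma$, I would rewrite \eqref{eq:planformula} in the flux form $\dE|F|=|F|_{t=0}+2\int_0^T\mathcal{E}(\mathbf{1}_S,p_t\mathbf{1}_{S'})\,dt$, where $\mathcal{E}(f,g)=\sum_{(u,v)\in E(G)}(f(u)-f(v))(g(u)-g(v))$ is the Dirichlet bilinear form: the $\ge d$ cross-edges of $E(S,\overline S)$ drive a diffusive boundary flux that, integrated over $[0,T]$, is $\gtrsim d\sqrt{T}\gg d/\gamma$ (the diffusion scale being $\sqrt T$ because $G$ contains the ring). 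In either regime the target bound $\dE|F(S,\overline S)|\ge\min(|S|/2,d/\gamma)=\tfrac{1}{2\gamma}\min(\gamma|S|,2d)$ follows.

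The principal technical obstacle is rigorously controlling the heat evolution from the given isoperimetric profile alone. A naive global Cheeger argument is far too weak: the global conductance of $G$ can be as small as $\Theta(d/N)$ (witnessed by sets of size $\sim N/2$), giving a spectral gap useless over $T=\mathrm{polylog}(N)$ when $d$ is constant. One is forced to exploit the much better expansion $\ge\gamma$ on small sets through profile-sensitive tools (restricted Cheeger, Morris--Peres evolving sets, or the Lovász--Kannan / Goel--Montenegro--Tetali spectral-profile machinery), while large sets must be handled via the separate diffusive-flux argument that leverages the $d$ surviving cross-edges together with the diffusive structure of the ring. The exponent condition $a>2\beta+1$ is precisely the polylogarithmic margin needed to make these combined estimates work.
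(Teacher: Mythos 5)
Your opening identity
$$
\dE|F(S,\overline S)| \;=\; 2|S| - 2\langle\mathbf{1}_{S'},\,p_T\mathbf{1}_S\rangle,\qquad S'=b_0(S),
$$
is correct and is the same reduction the paper works with implicitly, and the dichotomy around $|S|\asymp d/\gamma$ matches the paper's split between ``small'' and ``large'' sets (the paper uses the threshold $k_d=4d/\gamma$). Beyond this, however, the two arguments diverge, and there are genuine gaps in your sketch.

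For small sets, the object you must control is $\sup_{S':|S'|\le |S|}\langle\mathbf{1}_{S'},p_T\mathbf{1}_S\rangle$, i.e.\ the mass that $p_T\mathbf{1}_S$ puts on an \emph{arbitrary} set $S'$ of the same size (the initial permutation is adversarial). A Dirichlet eigenvalue on $S$ does not give this: it controls $\dP(\text{walk stays in }S\text{ up to }T)$, not $\dP(X_T\in S')$, and a walk that leaves $S$ can return, or can concentrate on some other small set $S'$. You do gesture at the right repair (evolving sets / spectral-profile machinery), which in spirit is what the paper proves from scratch as Theorem~\ref{theorem_2}: a ``partial spreading'' estimate $\sum_{i\in S'}\pi_i(t)\le\frac{|S'|}{k+1}+\sqrt{k+1}\,e^{-\phi_k^2 t/(2\Delta)}$ for all $|S'|\le k$, derived via a collapsed graph and a Cheeger-type bound (Lemma~\ref{lemme_2}). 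So on this half, same circle of ideas, but your version is missing the step that handles $S'\neq S$.

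For large sets your approach is genuinely different, and I do not think it can be made to work as sketched. First, the sign of $\mathcal{E}(\mathbf{1}_S,p_t\mathbf{1}_{S'})$ is not controlled for $S'\ne S$; you would have to case-split on $|S\cap S'|$, and even in the worst case $S'=S$ (where positivity holds by self-adjointness) you still need a quantitative lower bound on $\int_0^T\mathcal{E}(\mathbf{1}_S,p_t\mathbf{1}_S)\,dt$. Second, ``the diffusion scale is $\sqrt T$ because $G$ contains the ring'' is not a usable lower bound: $p_t$ is the heat flow on the \emph{whole} graph $G$, and the shortcut edges can in principle flatten the gradient across a given ring cross-edge, so the one-dimensional heat-kernel heuristic $\mathcal{E}\gtrsim d/\sqrt t$ is not justified. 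Tellingly, the paper does not use the ring at all in the proof of Lemma~\ref{lemma_2}; the ring enters only in Section~\ref{sec:fluctuations}, to count candidate sets $S$ in the union bound. What the paper does instead is purely combinatorial/profile-based: define $\pi_{[m]}(t)$ as the sum of the $m$ largest entries of $p_t\mathbf{1}_S/|S|$, prove the differential inequality of Lemma~\ref{lem:cutcut} using only the isoperimetric hypothesis and the degree bound, dominate $\pi_{[\cdot]}$ by an auxiliary one-dimensional walk with jumps of size $d'=\lfloor d/4\rfloor$ on an \emph{index} interval of length $\Theta(\kappa d)=\Theta(d/\gamma)$, and conclude because that interval walk mixes in time $\Theta(\kappa^2)=\Theta(\log^{2\beta}N)\ll T$. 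That bound is exactly the worst case $\sup_{|S'|\le|S|}\langle\mathbf{1}_{S'},p_T\mathbf{1}_S\rangle = |S|\,\pi_{[|S|]}(T)$, so it also resolves the adversarial-$S'$ issue cleanly. I would suggest replacing the ring-flux heuristic by a sorted-mass/differential-inequality argument of this type, or by Morris--Peres evolving sets applied to the full isoperimetric profile.

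Finally, a small slip: you assert a restricted Cheeger constant $\ge\gamma$ for subsets of a set $S$ with $|S|\le 2d/\gamma$; from $\phi_m\ge\min(\gamma,d/m)$ you actually get $\ge\gamma/2$ for $m$ in $(d/\gamma,2d/\gamma]$ (the paper works with $\phi_{3k}\ge\gamma/12$). This only changes constants, but worth tracking.
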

Lemma~\ref{lemma_1} is then deduced from Lemma~\ref{lemma_2} by invoking some concentration inequalities together with union bounds. Details are given in Section~\ref{sec:fluctuations}.

An easy consequence of Lemma~\ref{lemma_1} is the following
\begin{corollary}\label{corollary_2}
After $\hbox{log}_2(N)$ phases, with high probability the graph $G$ on which pointers evolves admits the following lower bound on its isoperimetric profile:
\begin{equation}
\phi_k(G)\ge \gamma,\;\; k\le N/2.
\end{equation}
\end{corollary}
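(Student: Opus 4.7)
The plan is to iterate Lemma~\ref{lemma_1}, doubling the parameter $d$ at each phase until $d$ exceeds $\gamma N/2$. For the base case, observe that independently of the pointer configuration, the cycle edges alone guarantee $|E_G(S,\overline{S})|\ge 2$ for every $\emptyset\neq S\subsetneq [N]$. Hence the hypothesis of Lemma~\ref{lemma_1} with $d_0=2$ is satisfied at the end of the very first phase, since $\min(\gamma|S|,2)\le 2\le |E_G(S,\overline{S})|$.

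Each application of Lemma~\ref{lemma_1} doubles the budget, and the analogous blue-to-red statement holds by the symmetry of the construction in the two colors, so after $m$ successive phases, with probability at least $1-m\cdot o(1/N)$, the graph satisfies
\[
 |E(S,\overline{S})|\ \ge\ \min\bigl(\gamma|S|,\,2^{m}d_0\bigr)\ =\ \min\bigl(\gamma|S|,\,2^{m+1}\bigr)
\]
for all $S$ with $|S|\le N/2$. Choosing $m=\lceil \log_2 N\rceil$ gives $2^{m+1}\ge 2N>\gamma N/2$, so for every such $S$ one has $\gamma|S|\le \gamma N/2<2^{m+1}$, and the minimum therefore reduces to $\gamma|S|$. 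This is precisely the Cheeger-type bound $|E(S,\overline{S})|\ge \gamma|S|$ for $|S|\le N/2$, which is the content of $\phi_k(G)\ge \gamma$ as stated in the corollary.

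A union bound over the $m=O(\log N)$ phases, each failing with probability $o(1/N)$ by Lemma~\ref{lemma_1}, yields a total failure probability of $m\cdot o(1/N)=o(1)$, so the bound on the isoperimetric profile holds with high probability.

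The argument is essentially a clean geometric doubling induction, so there is no real obstacle beyond the iteration itself; the two small points requiring care are that the cycle-based base case is valid uniformly over initial pointer configurations (immediate from the ring), and that Lemma~\ref{lemma_1} applies in both directions (red-to-blue and blue-to-red) by the red/blue symmetry of the model.
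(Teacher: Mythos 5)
Your argument is essentially identical to the paper's: the same base case from the two cycle edges, the same iteration of Lemma~\ref{lemma_1} to double $d$ each phase, and the same union bound over $O(\log N)$ phases yielding failure probability $o(1)$. Your explicit remark about red/blue symmetry and the slightly sharper final bound $|E(S,\overline{S})|\ge \gamma|S|$ are both consistent with what the paper uses implicitly.
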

\begin{proof}
Clearly, at the beginning of the first phase the assumptions of Lemma~\ref{lemma_1} are satisfied with $d=2$. Indeed, any subset $S\subset [N]$ of size $|S|\le N/2$ is connected by at least two edges (that come from the cycle) to its complement $\overline{S}$, so that 
$$
|E_G(S,\overline{S})|\ge 2\ge \min (\gamma |S|,2).
$$
Denote by $\cE_f$ the event that at the beginning of phase $f$, the graph $G_f$ on which pointers will evolve in the next phase satisfies the assumptions of Lemma~\ref{lemma_1} with parameter $d_f=2^{f+1}$. Thus we have just established that event $\cE_0$ holds with certainty, and Lemma~\ref{lemma_1} entails that 
$$
\dP(\overline{\cE}_{f+1}|\cE_f)\le o(1/N),\; f\ge 0.
$$
Thus
$$
\begin{array}{ll}
\dP(\overline{\cE}_{f+1})&=\dP(\overline{\cE}_{f+1}|\cE_f)\dP(\cE_f)+\dP(\overline{\cE}_{f+1}|\overline{\cE_f})\dP(\overline{\cE_f})\\
&\le
o(1/N)+\dP(\overline{\cE}_{f}).
\end{array}
$$
By induction on $f$, this yields 
$$
\dP(\overline{\cE}_{f})\le o(f/N).
$$
For $f=\hbox{log}_2(N)$, the right-hand side of this expression is $o(1)$, so that with high probability, after $\hbox{log}_2(N)$ phases, the graph $G$ on which pointers evolve verifies for all $k\le N/2$:
$$
\phi_k\ge \min(\gamma, 2^{f+1}/k)\ge \min(\gamma, N/k)=\gamma.
$$
\end{proof}
The proof of Theorem~\ref{theorem_1} is then concluded as follows:
\begin{proof}
By Corollary~\ref{corollary_2}, after $f=\hbox{log}_2(N)$ phases, the graph on which pointers evolve is a $\gamma$-expander, i.e. its isoperimetric constant $\phi_{N/2}$ is at least $\gamma$. We shall exploit this property to show that the interchange process on $G$ mixes in time $\hbox{polylog}(N)\le T$, so that with high probability, in two more phases our process will have reached stationarity. 

Our main tool to this end is Theorem 4.6, p. 47 in Berestycki \cite{berestycki}, which gives a sufficient condition for the (discrete time version of the) interchange process on a graph $G$ to mix in time $T$. Adapted to our continuous time setup, this theorem guarantees that the continuous time interchange process on $G$, which in time $T$ incurs on average $|E|T$ transitions of the discrete time process, where $|E|$ is the number of edges of $G$, will have mixed in time $T$ provided 
\begin{equation}\label{interchange_mix}
T\ge 8 \ln(N)\Delta K/N.
\end{equation} In this expression, the quantities $\Delta$ and $K$ are defined as follows. For each pair of nodes $(i,j)\in [N]$, one must define a path $\gamma_{ij}$ on $G$ connecting these two nodes. 
$\Delta$ is then defined as the largest length of all paths $\gamma_{ij}$, and $K$ as the supremum over edges $e$ in $G$ of the number of paths $\gamma_{ij}$ crossing $e$. According to Lemma~\ref{expanders_etc} below, for a $\gamma$-expander with constant node degrees of order 1, we can choose these paths such that $\Delta=O(\ln(N)/\gamma^2)$ and $K=O(N\ln(N)^2/\gamma^2)$. Plugged into \eqref{interchange_mix}, these evaluations imply that mixing has occurred by time $T$ provided $T$ is large compared to $\ln(N)^4/\gamma^4$. Since we have assumed $T=\ln(N)^a$ with $a>8$, this follows by our choice of $\gamma=\ln(N)^{\beta}$ where the only constraint on $\beta$ is $\beta>1$. 
\end{proof}

\begin{lemma}\label{expanders_etc}
Let $G$ be a $d$-regular graph on $[N]$ nodes, such that $\phi_{N/2}(G)\ge \gamma$. Then one can construct for each pair of nodes $(i,j)\in [N]$
 paths $\gamma_{ij}$ on $G$ each of length no larger than $\Delta=2\ln(N)d^2/\gamma^2$, and such that each edge $e$ of $G$ is crossed by $O(N\ln^2(N)d^2/\gamma^2)$.
\end{lemma}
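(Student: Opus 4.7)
The plan is to combine Cheeger's inequality with a probabilistic canonical-path construction via conditioned random walks. By Cheeger's inequality applied to the $d$-regular graph $G$, the hypothesis $\phi_{N/2}(G)\ge\gamma$ (read as edge expansion at least $\gamma$) yields a spectral gap of the lazy simple-random-walk transition matrix $P$ of order $\gamma^2/d^2$; setting $\tau:=\log(N)d^2/\gamma^2\le\Delta/2$, standard mixing-time estimates then ensure that the $\tau$-step walk from any vertex is within total variation $o(1/N)$ of the uniform distribution on $[N]$, so in particular $P^\tau(i,j)\in[1/(2N),2/N]$ uniformly in $(i,j)$.

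For each pair $(i,j)$, sample independently a trajectory $W_{ij}$ of this walk of length $\tau$ from $i$, conditioned to end at $j$, and let $\gamma_{ij}$ be the simple path obtained from $W_{ij}$ by shortcutting any loops; its length is at most $\tau\le\Delta$, as required. To control congestion on a directed edge $e=(u,v)$, reversibility gives that the probability $W_{ij}$ traverses $e$ at step $t$ equals
\begin{equation*}
\frac{P^t(i,u)\cdot (1/d)\cdot P^{\tau-1-t}(v,j)}{P^\tau(i,j)}.
\end{equation*}
Using the lower bound $P^\tau(i,j)\ge 1/(2N)$ together with the double stochasticity of $P$ (so that $\sum_i P^t(i,u)=\sum_j P^{\tau-1-t}(v,j)=1$), summation over $(i,j,t)$ yields an expected congestion $\dE[K_e]$ of order $N\tau/d = O(Nd\log(N)/\gamma^2)$.

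To promote this into a worst-case bound over all edges, I would apply a Bernstein-type concentration inequality to $K_e$, viewed as a sum of independent per-pair contributions each bounded in $[0,\tau]$, and then union-bound over the $O(Nd)$ edges of $G$. With positive probability this gives $K_e = O(Nd\log^2(N)/\gamma^2)$ uniformly in $e$, which is well within the claimed $O(N\log^2(N)d^2/\gamma^2)$ bound, so deterministic paths meeting the lemma's requirements exist by the probabilistic method. The main subtlety lies in this concentration step: per-pair contributions are not $\{0,1\}$-valued since a conditioned walk may revisit $e$ several times within $\tau$ steps, but their bounded range $\tau$ combined with the small expectation is precisely what Bernstein's inequality exploits, and the $d\log N$ slack between expected and target congestion absorbs the resulting deviation factor comfortably.
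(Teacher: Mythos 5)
Your proposal follows the same high-level plan as the paper — Cheeger's inequality gives a spectral gap, which gives fast mixing, which by the probabilistic method yields short low-congestion canonical paths — but the path-sampling mechanism is genuinely different. The paper draws $\Theta(N\ln N)$ \emph{unconditioned} walks of length $\Delta$ from each source $i$, shows by a coupon-collector argument that with high probability every $j$ is reached at time $\Delta$ by some walk from $i$, and controls congestion by bounding (via reversibility and Hoeffding) the number of visits each \emph{vertex} receives from the entire family of walks. You instead sample one random-walk \emph{bridge} per ordered pair, use reversibility and double stochasticity to bound the expected per-edge crossing count directly, then apply Bernstein and a union bound over edges. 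Your route bypasses the coverage step entirely and works at the edge level rather than the vertex level; it is cleaner and in fact saves a factor of $d$ in the congestion bound. The paper's route avoids having to argue about the Doob-conditioned walk, at the cost of carrying around a large redundant family of paths.

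One concrete gap: the choice $\tau=\ln(N)d^2/\gamma^2=\Delta/2$ is too small for the pointwise mixing estimate you invoke. With spectral gap $\lambda_2\approx\gamma^2/(2d^2)$, the relevant $\ell^\infty$ bound is $|NP^t(i,j)-1|\le N e^{-\lambda_2 t}$ (and this, not the TV bound, is what yields a uniform lower bound on $P^t(i,j)$); at $t=\tau$ the right side is $\approx\sqrt N$, which is vacuous, so you cannot conclude $P^\tau(i,j)\ge 1/(2N)$. You need $t\gtrsim(1/\lambda_2)\ln(2N)$, i.e.\ $t$ of order $\Delta$; taking $\tau=\Delta$ (or enlarging the constant in $\Delta$, as the paper also implicitly requires) fixes this while keeping path lengths within the stated bound, and the rest of your argument goes through unchanged.
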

\begin{proof}
The celebrated Cheeger's inequality (see e.g. Berestycki \cite{berestycki} Theorem 3.4 p.35) ensures that the spectral gap for the discrete time random walk on a $d$-regular graph $G$ with $\phi_{N/2}(G)\ge \gamma$ is at least $\gamma^2/(2d^2)$. Thus the total variation distance between the distribution of the random walk at time $\Delta:=2 d^2\ln(N)/\gamma^2$ and the uniform distribution on $G$ is $o(1/N)$ (this follows e.g. by Theorem 2.2, p. 22 in \cite{berestycki}). As a result, for any $i\in[N]$, the probability that the walk started at $i$ hits $j$ at time $\Delta$ is at least $1/{2N}$. Consider then the following randomized construction. For each $i$, create $5 N\ln(N)$ independent walks of length $\Delta$ started at $i$. The probability that for some particular $\in[N]$, no such walk  issued from $i$ hits $j$ is then  at most
$$
\left(1-1/2N\right)^{5N\ln(N)}\le e^{-5\ln(N)/2}=o(N^{-2}).
$$ 
Thus with high probability, the collection of paths thus created joins every node $i$ to every node $j$. 

Let us now evaluate the number of times a given edge $e=(u,v)$ of $G$ is traversed by this collection of paths. This is no larger than the number of times these paths visit node $u$. For $t\le 5N\ln(N)$, denote by $X_i(t)$ the number of visits to $u$ by  the $t$-th path sampled with starting point $i$. Clearly, $X_i(t)\le \Delta$. Also, 
$$
\dE \sum_{i\in [N]}\sum_{t\le 5 N\ln(N)}X_i(t)=5N\ln(N)\sum_{i\in [N]}\sum_{\ell=0}^{\Delta}P_{i u}^{(\ell)},
$$
where $P_{i u}^{(\ell)}$ denotes the transition probability from $i$ to $u$ in $\ell$ steps of the walk. However the walk is symmetric, so that $P_{i u}^{(\ell)}=P_{u i}^{(\ell)}$. The above expection thus reads
$$
\dE \sum_{i\in [N]}\sum_{t\le 5 N\ln(N)}X_i(t)=5N\ln(N)(\Delta+1).
$$
Let $Z=\sum_{i\in [N]}\sum_{t\le 5 N\ln(N)}X_i(t)$ denote the total number of visits to $u$ by all paths. For some arbitrary $C>0$, Hoeffding's inequality then gives
$$
\dP(Z\ge \dE(Z)+CN\Delta\ln(N))\le \exp\left(-\frac{C^2N^2\Delta^2\ln(N)^2}{\Delta^2 5 N^2\ln(N)}  \right)=e^{-C^2\ln(N)/5}.
$$
Taking $C=3$ (say), the right-hand side is $o(N^{-1})$. Thus with high probability, no node $u$ is visited more than $(9N\ln(N)\Delta)$ times by the collection of constructed paths. The announced result follows.
\end{proof}

\section{From bounds in expectation to bounds with high probability}

We now give the derivation of Lemma~\ref{lemma_1} from the result of Lemma~\ref{lemma_2}. 

We thus consider the graph $G$ on $[N]$ formed by edge cycles together with shortcut cycles after their evolution through a phase, and thus have by Lemma~\ref{lemma_2} that for each fixed set $S\subset[N]$, the number $|F(S,\overline{S})|$ of such shortcut edges connecting $S$ to $\overline{S}$ verifies
$$
\dE |F(S,\overline{S})|\ge \frac{1}{2\gamma}\min(\gamma |S|, 2d).
$$
Fix $k\le N/2$. We further restrict ourselves to $k\ge 2/\gamma$, since for smaller $k$ one clearly has $\phi_k\ge \gamma$, because of the presence of at least two cycle edges in $E(S,\overline{S})$ in any $S$ of size $k\in[N/2]$. The conclusion of Lemma~\ref{lemma_2} is thus immediate for smaller $k$.

For some set $S$ of size $k$, let $\ell$ be the number of contiguous portions of the cycle it is made of. Thus $\ell\in [k]$, and we have
$$
|E(S,\overline{S})|=|F(S,\overline{S})|+2\ell.
$$
We will need the following two results.
\begin{lemma}
Conditionally on the shortcut configuration at the beginning of the considered phase, the random variable $|F(S,\overline{S})|$ consists of the sum of negatively associated Bernoulli random variables. Consequently, for any $r\in(0,1)$, it holds that
\begin{equation}\label{chernoff_2}
\dP(|F(S,\overline{S})|\le r (2\gamma)^{-1}\min(\gamma |S|, 2d))\le e^{-(2\gamma)^{-1} \min(\gamma |S|, 2d)h(r)},
\end{equation}
where $h(r):=r\ln(r)-r+1$.
\end{lemma}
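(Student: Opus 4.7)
My plan is to exhibit $|F(S,\overline{S})|$ as a sum of indicators drawn from the symmetric exclusion process induced by the blue interchange dynamics, to establish negative association of these indicators, and then to derive \eqref{chernoff_2} from the standard lower-tail Chernoff inequality, whose validity extends from sums of independent Bernoullis to sums of NA ones.

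For the first step, I would view the blue phase as the interchange process on $G$: for each node $n \in [N]$ one places a particle labeled $n$ at vertex $b_n$, and at every edge of $G$ the two endpoint-particles are swapped at unit Poisson rate. Coloring particle $n$ black if $n \in S$ and white otherwise, the fact that swaps preserve label multisets means that
\[
Y_v(t) \;:=\; \mathbf{1}\{\text{particle at } v \text{ at time } t \text{ is black}\},\qquad v \in [N],
\]
is the symmetric exclusion process on $G$ with $|S|$ black particles, started from the deterministic configuration fixed by the initial blue permutation and the shortcut configuration. A blue edge $(n,b_n)$ belongs to $F(S,\overline{S})$ exactly when its two endpoints lie on opposite sides of $(S,\overline{S})$; using the conservation $\sum_v Y_v(T)=|S|$ one identifies $|F(S,\overline{S})|$ with $\sum_{v \in S}(1-Y_v(T))$, a sum of $|S|$ Bernoulli random variables.

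For negative association, I would note that conditional on the shortcut configuration the initial vector $(Y_v(0))_v$ is deterministic, hence trivially a strongly Rayleigh probability measure on $\{0,1\}^{[N]}$. By the theorem of Borcea, Brändén and Liggett that the symmetric exclusion semigroup preserves the strong Rayleigh class, $(Y_v(T))_v$ is strongly Rayleigh and in particular negatively associated. Since NA is preserved under coordinate-wise monotone transformations and under restriction to subsets of coordinates, $(1-Y_v(T))_{v \in S}$ is a negatively associated family of Bernoullis.

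For the Chernoff bound I would use the standard extension of Chernoff's exponential-moment bound from independent to NA Bernoullis (Dubhashi--Ranjan). Writing $\mu_0 := (2\gamma)^{-1}\min(\gamma|S|,2d)$ and $\mu := \EE|F(S,\overline{S})| \ge \mu_0$ by Lemma~\ref{lemma_2}, for any $r \in (0,1)$ one obtains
\[
\dP\bigl(|F(S,\overline{S})| \le r\mu_0\bigr) \;\le\; \exp\!\bigl(-\mu\, h(r\mu_0/\mu)\bigr),
\]
and a short convexity check---setting $\alpha := \mu/\mu_0 \ge 1$ and using $\alpha - 1 \ge r\ln\alpha$ for $r \in (0,1)$---yields $\mu\,h(r\mu_0/\mu)\ge \mu_0\,h(r)$, establishing \eqref{chernoff_2}. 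The main obstacle is the NA step: citing Borcea--Brändén--Liggett is clean, but a self-contained derivation would require induction along the Poisson swap times of the graphical construction of the interchange process and a careful verification that each random swap, which exchanges the coordinates $Y_u$ and $Y_v$ along an edge $(u,v)$, preserves negative association of the indicator vector---doable but somewhat technical.
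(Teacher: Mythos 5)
Your proposal follows essentially the same route as the paper: both realize the count as a sum of Bernoulli indicators from the symmetric exclusion process driven by the interchange dynamics, both invoke the Borcea--Brändén--Liggett result that the SEP semigroup preserves the strong Rayleigh class (hence negative association, from a deterministic and thus trivially strong Rayleigh initial configuration), and both appeal to Dubhashi--Ranjan to transfer the independent-case Chernoff bound to the NA setting. You count pointers from $\overline{S}$ into $S$ rather than the paper's pointers from $S$ into $\overline{S}$, but these are equal by conservation of the number of black particles, and your explicit check that the exponent $\mu_0\,h(r)$ with $\mu_0=(2\gamma)^{-1}\min(\gamma|S|,2d)\le\mu$ follows from the Chernoff bound at the true mean $\mu$ (via $\alpha-1\ge r\ln\alpha$) is a worthwhile detail that the paper glosses over with ``classical evaluations of Chernoff bounds.''
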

\begin{proof}
Represent the collection of termination points of pointers through the binary variables $\xi_i\in\{0,1\}$, $i\in[N]$ where $\xi_i=1$ if and only if one pointer issued from $S$ points towards $i$.
The set of variables $\{\xi_i\}$ evolves, under the interchange process dynamics, as a symmetric exclusion process. Moreover, when we condition on the initial configuration, its distribution is deterministic, and therefore satisfies a strong form of negative dependence known as {\em strong Rayleigh property}, see Borcea et al. \cite {borcea}. It then follows from \cite{borcea}, Proposition 5.1, that the collection of indicator variables $\{\xi_i(T)\}$ also satisfies this strong Rayleigh property at time $T$ when the phase is completed. Strong Rayleigh property implies negative association (see again \cite{borcea}, Section 2, Figure 1). It then follows from Dubhashi and Ranjan \cite{dubhashi} that $|F(S,\overline{S})|$, which also reads
$$
|F(S,\overline{S})|=\sum_{i\in\overline{S}}\xi_i(T),
$$
verifies the same Chernoff bounds that it would if the $\xi_i(T)$ were mutually independent. The announced result \eqref{chernoff_2} then follows from classical evaluations of Chernoff bounds.
\end{proof}
\begin{lemma}
The number of sets $S\subset[N]$ of size $k$ that consist of $\ell$ contiguous portions of the cycle is at most $N^{2\ell}$.
\end{lemma}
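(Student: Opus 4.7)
The plan is to encode each such subset $S$ by its boundary data on the cycle and then observe that this encoding takes values in a set of size at most $N^{2\ell}$, while being injective on the class of sets we are trying to count. Since $S$ consists of exactly $\ell$ maximal contiguous arcs of the cycle, each arc is completely determined by its starting vertex and its ending vertex (reading around the cycle in the fixed orientation $1,2,\ldots,N,1$).

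Concretely, I would assign to each $S$ the tuple $(a_1,b_1,\ldots,a_\ell,b_\ell)\in[N]^{2\ell}$, where $a_j$ and $b_j$ are respectively the first and last vertex of the $j$-th arc of $S$, enumerated in some canonical order (say, starting from the arc containing the smallest vertex index not preceded on the cycle by another element of $S$, and then proceeding cyclically). This map is injective on the family of sets under consideration: from the tuple one reconstructs each arc as the contiguous stretch from $a_j$ to $b_j$, and hence $S$ itself. Since the tuple lies in $[N]^{2\ell}$, which has cardinality $N^{2\ell}$, the desired bound follows.

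There is essentially no obstacle here; the only point to be careful about is that the encoding is allowed to be lossy or overcounting (we do not need to enforce the constraints that the arcs be disjoint, nonempty, of total length $k$, or that the order be canonical), since we only seek an upper bound. A sharper count of order $\binom{N}{2\ell}$ could be given by choosing the $2\ell$ endpoints unordered, but $N^{2\ell}$ is simpler and sufficient for the union-bound argument in Section~\ref{sec:fluctuations} that combines this lemma with the Chernoff estimate \eqref{chernoff_2}.
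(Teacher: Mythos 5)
Your proof is correct and takes essentially the same route as the paper: both encode a set $S$ with $\ell$ contiguous arcs by the ordered list of its $2\ell$ arc endpoints in $[N]$, observe that this encoding is injective, and conclude the bound $N^{2\ell}$; you merely spell out the canonical-ordering detail and the remark that overcounting is harmless.
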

\begin{proof}
We may enumerate such sets $S$ by scanning the cycle $[N]$ starting from 1, and identifying the first time we find a starting point of an interval in $S$, then the end point of that interval, and so on. Clearly this will produce $2\ell$ numbers in $[N]$, which characterize $S$, hence the result.
\end{proof}
The union bound gives us the following bound on the probability $p_k$ that for some set $S$ of size $k$, one does not have the desired property $|E(S,\overline{S})|\ge \min(\gamma k,2d)$:
$$
p_k\le \sum_{\ell=1}^k N^{2\ell}\dP(|F(S,\overline{S})|\le \min(\gamma k,2d)-2\ell).
$$
We now distinguish according to whether $\gamma k\le 2d$ or not.

{\bf Case 1: $\gamma_k \le 2d$}. We then have
$$
\begin{array}{lll}
p_k&\le& \sum_{\ell=1}^{\gamma k/2}N^{2\ell} \exp\left(-(2\gamma)^{-1}\gamma k h(\frac{\gamma k-2\ell}{(2\gamma)^{-1}\gamma k})\right)\\
&\le &N\exp\left(\gamma k \ln(N) -(2\gamma)^{-1}\gamma kh(o(1))\right)\\
&=&\exp\left(\ln(N)[1+\gamma k -\gamma k (1/2)\ln(N)^{\beta-1}h(o(1))]\right).
\end{array}
$$
The term in square brackets is asymptotically equivalent to $-\gamma k(1/2)\ln(N)^{\beta-1}h(o(1))$, because $h(o(1))$ tends to 1, and we assumed $\beta>1$. Moreover, since $\gamma k\ge 1/2$, the whole exponent is large compared to $\ln(N)$. Thus $p_k=o(N^{-r})$ for any fixed $r>0$.

{\bf Case 2: $\gamma_k > 2d$}. We then have
$$
\begin{array}{lll}
p_k&\le& \sum_{\ell=1}^{d}N^{2\ell} \exp\left(-(2\gamma)^{-1}d h(\frac{2d-2\ell}{(2\gamma)^{-1}2d})\right)\\
&\le &N\exp\left(2d \ln(N) -(2\gamma)^{-1}2d kh(o(1))\right)\\
&=&\exp\left(\ln(N)[1+2d -2d (1/2)\ln(N)^{\beta-1}h(o(1))]\right).
\end{array}
$$
We can then conclude as in the preceeding case.
\label{sec:fluctuations}

\section{Controlling the mean}

The goal of this Section is to prove Lemma~\ref{lemma_2}. We thus assume to be given a graph $G$ on $[N]$, which in our context is constituted of a cycle plus one set of (red, say) pointers, so that $G$ is a 4-regular undirected graph. The structural assumption made on $G$ is that for some integer $d$ and some positive $\gamma$ (in our context, $\gamma=\ln(N)^{-\beta}$ for some fixed $\beta>1$) the isoperimetric profile of $G$ verifies
$$
\phi_k(G)\ge \min(\gamma k,d),\; k\le N/2.
$$
Our goal is to prove that for any fixed set $S$ of size $k\le N/2$, after $T$ time steps, by the end of the next phase, the expected number $\dE|F(S,\overline{S})|$ of (blue, say) pointers connecting $S$ to $\overline{S}$ after $T$ time steps is on average at least $(2\gamma)^{-1}\min(\gamma k, 2d)$.

We shall divide the proof into two parts, arguing differently depending on the size $k$ of considered sets $S$. 

\subsection{Small sets: from partial expansion to partial spread}
To deal with smaller values of $k$, we develop a new tool, which establishes a lower bound on the speed with which the mass of a random walk on a graph is {\em partially} spread, which only involves a single isoperimetric number $\phi_k$ of the graph. Of course, the {\em partial} spread of the mass is related to the corresponding value of $k$. 

The general framework and statement are as follows. The proof is deferred to Section \ref{sec:partial_spread}.

Let $G$ be an undirected graph on $n$ nodes, with maximal degree $\Delta$. We consider $\{X_t\}$,  the continuous time random walk on $G$, 
Our aim is to prove the following
\begin{theorem}\label{theorem_2}
Let $G$ be an undirected graph on node set $[N]$ with maximal degree $\Delta$, and $\{X_t\}$ the continuous time random walk on $G$, i.e. the Markov jump process on the vertex set $[N]$, with as its non-zero jump rates $q_{ij}=1$ for all edges $(i,j)$ of the graph (its infinitesimal generator is then $-L$ where $L$ is the Laplacian matrix of $G$). Let $\{\pi_i(t)\}_{i\in[N]}$ denote its law at time $t$. 

Let $k\le N/2$ be fixed, and define the isoperimetric constant $\phi_k(G)$ as 
$$
\phi_k(G):=\inf_{S\subset [n], |S|\le k}\frac{|E(S,\overline{S})|}{|S|}\cdot
$$
For an arbitrary initial distribution of the  random walk, for any set $S$ such that $|S|\le k$ and any $t\ge 0$, one has:
\begin{equation}\label{partial_spread}
\sum_{i\in S} \pi_i(t)\le \frac{|S|}{k+1}+\sqrt{k+1}e^{-\lambda^*_2 t},
\end{equation}
where 
\begin{equation}\label{lambda_lower_bound}
\lambda^*_2=\frac{\phi_k(G)^2}{2\Delta}\cdot
\end{equation}
\end{theorem}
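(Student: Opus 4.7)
The plan is to monitor the \emph{excess mass} vector $f_i(t) := (\pi_i(t) - 1/(k+1))_+$, which records by how much each coordinate of $\pi(t)$ exceeds the threshold $c := 1/(k+1)$. The key preliminary observation is that the support $\{i : f_i(t) > 0\}$ automatically has cardinality at most $k$: if $k+1$ coordinates satisfied $\pi_i(t) > c$, their mass would sum to more than $(k+1)c = 1$, contradicting $\sum_i \pi_i(t) = 1$. Thus any estimate on $f(t)$ will only ``see'' sets of size at most $k$, making $\phi_k$ the natural isoperimetric quantity.

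Next, I would differentiate $\|f(t)\|_2^2$ using $\dot\pi = -\pi L$. A case split on $\{\pi_i > c\}$ vs.\ $\{\pi_i \le c\}$ handles the non-smoothness of the positive part and gives $\frac{d}{dt}\|f\|_2^2 = -2\sum_i f_i (L\pi)_i$. Symmetrizing this sum over edges and doing a three-case analysis on each edge (both endpoints above $c$; one above and one below; both below) yields the pointwise inequality $(f_i - f_j)(\pi_i - \pi_j) \ge (f_i - f_j)^2$. Summing then gives the crucial inequality $\sum_i f_i(L\pi)_i \ge \langle f, Lf\rangle$, so that $\frac{d}{dt}\|f(t)\|_2^2 \le -2\langle f(t), Lf(t)\rangle$.

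Since $f(t)$ is nonnegative and supported on a set of size $\le k$, the standard discrete Cheeger argument (applied to $f^2$ via coarea, then Cauchy-Schwarz against the edge gradient) delivers $\langle f, Lf\rangle \ge \frac{\phi_k^2}{2\Delta}\|f\|_2^2 = \lambda_2^* \|f\|_2^2$, because every level set $\{f > s\}$ sits inside the support and hence has size at most $k$. Combined with the previous step and Grönwall, this yields $\|f(t)\|_2^2 \le \|f(0)\|_2^2 e^{-2\lambda_2^* t}$; the trivial bound $\|f(0)\|_2^2 \le \sum_i \pi_i(0)^2 \le \|\pi(0)\|_\infty \sum_i \pi_i(0) \le 1$ then gives $\|f(t)\|_2 \le e^{-\lambda_2^* t}$.

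To conclude, for $|S|\le k$ I would decompose
$$
\sum_{i\in S}\pi_i(t) = \frac{|S|}{k+1} + \sum_{i\in S}\left(\pi_i(t) - \frac{1}{k+1}\right) \le \frac{|S|}{k+1} + \sum_{i\in S} f_i(t),
$$
and apply Cauchy-Schwarz to the second term: $\sum_{i\in S} f_i(t) \le \sqrt{|S|}\,\|f(t)\|_2 \le \sqrt{k+1}\,e^{-\lambda_2^* t}$, which is exactly \eqref{partial_spread}. The main conceptual hurdle is identifying the right test function: it is the observation that $(\pi - 1/(k+1))_+$ has its support automatically confined to $\le k$ vertices which unlocks the use of the small-set isoperimetric constant $\phi_k$ through a Dirichlet-type Cheeger inequality. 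The other delicate point is the sharp edgewise inequality $(f_i-f_j)(\pi_i-\pi_j) \ge (f_i - f_j)^2$ (rather than merely $\ge 0$), without which the Grönwall contraction would not close.
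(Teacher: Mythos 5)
Your proof is correct, and it takes a genuinely different (and arguably cleaner) route from the paper's. The paper sorts the entries of $\pi(t)$ into decreasing order on successive intervals where the ordering is constant, collapses all but the $k$ largest into a single supernode to form an auxiliary graph $G^{(j)}$, proves a pointwise domination $\pi_{(i)}(t)\le\nu_{i\wedge(k+1)}(t)$ of the sorted law by the law $\nu$ of a random walk on $G^{(j)}$ via a maximum-principle/Gr\"onwall argument (Lemma~\ref{lemme_1}), and then applies a Cheeger-type coarea estimate to $G^{(j)}$ (Lemma~\ref{lemme_2}). You bypass the sorting and the auxiliary process entirely by choosing the Lyapunov function $\|f(t)\|_2^2$ with $f_i=(\pi_i-\frac{1}{k+1})_+$: the observation that $\mathrm{supp}(f)$ is automatically confined to $\le k$ vertices is precisely what the collapsing construction was engineered to enforce, and the edgewise inequality $(f_i-f_j)(\pi_i-\pi_j)\ge(f_i-f_j)^2$ (which holds because $x\mapsto(x-c)_+$ is $1$-Lipschitz and nondecreasing) replaces the domination lemma. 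The Dirichlet Cheeger bound $\langle f,Lf\rangle\ge\frac{\phi_k^2}{2\Delta}\|f\|_2^2$ you invoke is proved by exactly the same coarea-plus-Cauchy--Schwarz computation as the paper's Lemma~\ref{lemme_2}, so the two arguments share that core; but yours avoids the bookkeeping around the time-dependent permutation $\sigma^{(j)}$, the interval decomposition of $\RR_+$, and the coupling to $\nu$, while the paper's version gives a slightly more structural statement (the domination by an explicit auxiliary chain) that might be reusable elsewhere. Both yield the same exponent $\lambda_2^*=\phi_k^2/(2\Delta)$, and your prefactor $\sqrt{|S|}\le\sqrt{k}$ is if anything marginally sharper than the stated $\sqrt{k+1}$.
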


\begin{remark}
The quantity $\lambda^*_2$ is of the same form as the lower bound on the spectral gap $\lambda_2$ of the Laplacian that the celebrated Cheeger inequality gives when $k=n/2$. In this classical situation, instead of \eqref{partial_spread} one has the conclusion that $d_{var}(\pi(t),\mathcal{U}([n]))\le \sqrt{n}e^{-\lambda^*_2 t}$. 
\end{remark}

Let us now use Theorem~\ref{theorem_2} to prove the conclusion of Lemma~\ref{lemma_2} for small values of $k$. 

Specifically, let $\kappa=4 \gamma^{-1}$, $k_d:=\kappa d$, and consider $k\le k_d$. For a fixed set $S$ of size $k$, and a fixed node $i\in S$, let $X_i(t)$ denote the location of the pointer issued from $i$ at time $t$. Under the dynamics we consider, $X_i(t)$ corresponds to an ordinary random walk on the graph $G$. Moreover, the assumptions of Lemma~\ref{lemma_2} guarantee that the graph $G$ satisfies 
$$
\phi_{3k}(G)\ge \min(\gamma,d/3k)\ge \min(\gamma,1/(3\kappa))=\gamma/(12).
$$
By theorem~\ref{theorem_2}, one therefore has
$$
\dP(X_i(T)\in S)\le \frac{|S|}{3k}+\sqrt{3k+1}e^{-\lambda^*_2 T},
$$
where $\lambda^*_2=\phi_{3k}(G)^2/(2\Delta)\ge \gamma^2/(12^2\cdot 2\cdot 4)$. Recall that $T=\ln(N)^a$ and that $\gamma=\ln(N)^{-\beta}$ for some $\beta>1$. By assumption, $a>3$. For some suitable choice of $\beta$, we then have $a-2\beta>1$ and thus
$$
\dP(X_i(T)\in S)\le \frac{|S|}{3k}+\sqrt{3k+1}e^{-\lambda^*_2 T}\le \frac{1}{3}+\exp\left(-\Omega(\ln(N)^{a-2\beta})\right)\le 1/2.
$$
Summming over $i\in S$, we obtain that the expected number $\dE|F(S,\overline{S})|$ of pointers issued from $S$ that point into $S$ at the end of the phase is no larger than $|S|/2$.

\subsection{Large sets}
We now deal with values of $k$ in the range $[\kappa d, N/2]$. Throughout this section we consider a fixed set $S$ of size $k$.

In this context, we define $\pi_i(t)$ to be $1/k$ times the probability that a pointer issued from $S$ targets $i$, conditional on the initial configuration of these pointers at the beginning of the phase. We also let $\pi_{(i)}(t)$ denote the $i$-th largest value $\pi_j(t)$, $j\in[N]$, and $\pi_{[m]}(t):=\sum_{i\in[m]}\pi_{(i)}(t)$ denote the cumulative mass that the probability distribution $\pi(t)$ puts on the $m$ nodes where its mass is the largest.

Obviously, one has 
$$
\pi_{(i)}(0)=\frac{1}{k}\II_{i\in[k]}.
$$ 
We now establish a property of the time derivative $\frac{d}{dt}\pi_{[m]}(t)$:
\begin{lemma}\label{lem:cutcut}
Under the assumptions of Lemma \ref{lemma_1} that $\phi_m(G)\ge \min(\gamma m, d/m)$, $m\in [N]$, one has the inequalities
\begin{equation}\label{cutcut}
\frac{d}{dt}\pi_{[m]}(t)\le -4 \sum_{j=1}^{d_m}\left(\pi_{(m_j+1)}-\pi_{(m-j+1+d_m)}\right),
\end{equation}
where $d_m=\lfloor \min(\gamma m, d)/4\rfloor$.
\end{lemma}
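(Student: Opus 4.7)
The plan is to combine two ingredients: (i) that $\pi(t)$ evolves according to the heat equation on $G$, and (ii) an isoperimetric-plus-degree rearrangement applied to the top-$m$ level set of $\pi$. First, I would note that because the interchange process is a coupling of labeled particles under exclusion, the marginal position of any single pointer issued from $S$ performs the continuous-time simple random walk on $G$ with unit rate per edge. By linearity, $\pi_i(t) = (1/k)\sum_{s\in S}\dP[b_s(t)=i]$ therefore satisfies
$$
\frac{d}{dt}\pi_i(t) = \sum_{j : (i,j)\in E(G)} \bigl(\pi_j(t)-\pi_i(t)\bigr).
$$

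Next, I would exploit the representation $\pi_{[m]}(t) = \max_{|T|=m} \sum_{i \in T} \pi_i(t)$ as the maximum of finitely many smooth linear functions of $\pi(t)$. By an envelope argument, at almost every $t$ this maximum is uniquely attained by a set $T_m(t)$, and differentiation gives
$$
\frac{d}{dt}\pi_{[m]}(t) = \sum_{(i,j)\in E(T_m,\overline{T_m}),\,i\in T_m} \bigl(\pi_j(t)-\pi_i(t)\bigr),
$$
since edges internal to $T_m$ cancel in pairs. Because $T_m$ consists of the vertices with the $m$ largest $\pi$-values, each summand is non-positive.

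To sharpen this into the quantitative bound, I would invoke the hypothesis $|E(T_m,\overline{T_m})| \ge \min(\gamma m, d) \ge 4 d_m$ together with the fact that $G$ has maximum degree $4$. Writing $\alpha_v$ and $\beta_u$ for the numbers of boundary edges at $v \in T_m$ and $u \notin T_m$ respectively, one has $\sum \alpha_v = \sum \beta_u \ge 4 d_m$ with $\alpha_v,\beta_u \le 4$, and the derivative rewrites as $\sum_u \beta_u \pi_u - \sum_v \alpha_v \pi_v$. A rearrangement argument then shows: to minimize $\sum_v \alpha_v \pi_v$ under these constraints, one must place weight $4$ on the $d_m$ smallest $\pi$-values inside $T_m$, namely $\pi_{(m-j+1)}$ for $j=1,\ldots,d_m$; symmetrically, $\sum_u \beta_u \pi_u$ is maximized by weight $4$ on the top $d_m$ values outside $T_m$, namely $\pi_{(m+j)}$. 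Combining these two bounds yields the stated inequality, the pairing on the right-hand side being equivalent to $\sum_j \bigl(\pi_{(m-j+1)}-\pi_{(m+j)}\bigr)$ after the reindexing $j \mapsto d_m - j + 1$.

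The main obstacle I anticipate is the rearrangement bookkeeping when the cut size strictly exceeds $4 d_m$: one must check that any excess mass beyond $4 d_m$ on either side only widens the gap, since it pairs some value $\pi_{(i)}$ with $i \le m$ against some $\pi_{(j)}$ with $j > m$, whose difference is non-negative. Making the envelope step rigorous at the negligible set of times where the top-$m$ set changes is standard, either via upper Dini derivatives or by noting that the derived upper bound holds for any $m$-set $T$ with $\min_{v\in T}\pi_v \ge \max_{u\notin T}\pi_u$, not only for the maximizer, so it applies pointwise in the weak sense sufficient for the subsequent integration.
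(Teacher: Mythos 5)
Your proof follows essentially the same route as the paper's: differentiate $\pi_{[m]}$ to obtain a sum of $\pi_j-\pi_i$ over edges crossing the top-$m$ cut (internal edges cancel), then combine the isoperimetric lower bound $|E(T_m,\overline{T_m})|\ge\min(\gamma m,d)\ge 4d_m$ with the degree bound $4$ in a rearrangement argument that concentrates the boundary mass on the $d_m$ indices nearest the cut on each side. Your write-up is in fact somewhat more careful than the paper's, which handles the rearrangement and the almost-everywhere differentiability of $\pi_{[m]}$ in a single terse sentence; you also correctly read the undefined $m_j$ in the lemma statement as the evident typo for $m-j$.
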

\begin{proof}
Assume to simplify notation that the permutation which sorts nodes $i$ in $[N]$ in decreasing order of $\pi_i$ is the identity, so that $\pi_i(t)=\pi_{(i)}(t)$. The time derivative of $\pi_{[m]}$ then reads
$$
\frac{d}{dt}\pi_{[m]}(t)=\sum_{i\in [m]}\sum_{j>m}\II_{i\sim j}(\pi_j-\pi_i).
$$
Indeed, changes in the mass $\pi_{[m]}$ result from interchange of pointer extremities $i,j$ with $i\le m$ and $j>m$, which occur at unit rate; when one such interchange occurs, the expected change to $\pi_{[m]}$ is precisely $\pi_j-\pi_i$. Now the number of such edges is by assumption at least $\min(\gamma m, d)$. Moreover, the number of such edges adjacent to any  node is at most 4, because the graph has degree bounded by 4. 

The value of the right-hand side in the above equation, because the $\pi_i$ are sorted in decreasing order, is minimized when the edges crossing the cut between $[m]$ are adjacent to nodes with index closest to $m$. The degree constraint then implies the upper bound \eqref{cutcut}. 
\end{proof}
Let $d':=\lfloor d/4\rfloor$. Let $I:=[k-(2/3)k_d,k+(2/3)k_d]$. We now introduce an auxiliary process $\{\nu_i(t)\}_{i\in[N],t>0}$ defined via:
$$
\begin{array}{lll}
\nu_i(0)&=\frac{1}{k}\II_{i\in[k]},&i\in[N],\\
\frac{d}{dt}\nu_i(t)&=4\II_{i\in I}\left[(\nu_{i-d'}(t)-\nu_i(t))\II_{i-d'\in I}+4(\nu_{i+d'}-\nu_i)\II_{i+d'\in I}\right],&i\in[N],\; t>0.
\end{array}
$$
The probability distribution $\nu(t)$ is readily interpreted as the law at time $t$ of a random walk started with uniform distribution on $
[k]$, that jumps from $i$ to $i+d'$ (resp., $i-d'$) at rate 4, provided both $i$ and the destination $i\pm d'$ lie in $I$. 

Denoting $\nu_{[i]}(t):=\sum_{j\in[i]}\nu_j(t)$, we then have the following
\begin{lemma}
For all $t>0$, $i\in [N]$, it holds that 
$$
\pi_{[i]}(t)\le \nu_{[i]}(t).
$$
\end{lemma}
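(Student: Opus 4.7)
The lemma is a majorization-type comparison between the top-$i$ mass of $\pi$ and the first-$i$ coordinate sum of $\nu$. My plan is a Gronwall/first-violation argument: compare the (Dini) derivatives of $\pi_{[m]}$ and $\nu_{[m]}$ at a putative first time of equality, using Lemma~\ref{lem:cutcut} for $\pi$ and the defining ODE for $\nu$. Abel summation will reduce both sides to combinations of the cumulative quantities $\pi_{[\cdot]}$ and $\nu_{[\cdot]}$, at which point the pointwise induction hypothesis immediately closes the argument.

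Setup and trivial range: at $t=0$, $\pi_{(i)}(0) = \nu_i(0) = (1/k)\II_{i \in [k]}$, so $\pi_{[i]}(0) = \nu_{[i]}(0) = \min(i, k)/k$ and the inequality holds with equality. The dynamics of $\nu$ keep its support inside $[1, k + (2/3)k_d]$ (positions outside $I$ never change, and no mass flows out of $I$ to positions $>k+(2/3)k_d$), so $\nu_{[i]}(t) = 1$ for $i \ge k + (2/3)k_d$ and the inequality is trivial there. Set $t^* := \inf\{t > 0 : \exists m,\ \pi_{[m]}(t) > \nu_{[m]}(t)\}$ and assume for contradiction that $t^* < \infty$. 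By continuity there is some $m^*$ with $\pi_{[m^*]}(t^*) = \nu_{[m^*]}(t^*)$, $\pi_{[j]}(t^*) \le \nu_{[j]}(t^*)$ for every $j$, and $D^+(\pi_{[m^*]} - \nu_{[m^*]})(t^*) \ge 0$.

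Core computation. The sum appearing in Lemma~\ref{lem:cutcut} telescopes:
$$
\sum_{j=1}^{d_{m^*}}\bigl(\pi_{(m^*-j+1)} - \pi_{(m^*-j+1+d_{m^*})}\bigr) \;=\; 2\pi_{[m^*]} - \pi_{[m^*-d_{m^*}]} - \pi_{[m^*+d_{m^*}]},
$$
and an analogous Abel summation applied to the explicit ODE of $\nu$ gives, in the central regime where the relevant window lies inside $I$,
$$
\tfrac{d}{dt}\nu_{[m^*]}(t^*) \;=\; -4\bigl(2\nu_{[m^*]}(t^*) - \nu_{[m^*-d']}(t^*) - \nu_{[m^*+d']}(t^*)\bigr).
$$
In this regime $d_{m^*} = d' = \lfloor d/4 \rfloor$. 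Substituting $\pi_{[m^*]}(t^*) = \nu_{[m^*]}(t^*)$ and using $\pi_{[m^*\pm d']}(t^*) \le \nu_{[m^*\pm d']}(t^*)$ from the induction hypothesis immediately yields $D^+(\pi_{[m^*]} - \nu_{[m^*]})(t^*) \le 0$. A standard perturbation (shifting the target inequality by $\epsilon e^{Ct}$ for large $C$ and $\epsilon \downarrow 0$) upgrades this to a strict contradiction of the first-violation definition of $t^*$.

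Main obstacle. The clean comparison above holds only in the central regime where $d_{m^*} = d'$ and the $\nu$-window $[m^*-d'+1,\, m^*+d']$ sits inside $I$. The delicate work lies in the two boundary regimes: (i) small $m^*$ for which $d_{m^*} = \lfloor \gamma m^*/4 \rfloor < d'$ because the isoperimetric bound has not yet saturated; and (ii) values of $m^*$ near $\partial I$ where the $\nu$-sum loses some of its terms because of the $\II_{i \in I}$ indicators in the ODE. The calibration $I = [k - (2/3)k_d,\, k + (2/3)k_d]$ with $k \ge \kappa d$ is designed precisely so that, after Abel summation, the leftover boundary terms have a favorable sign (mass leaks rightward out of $I \cap [k]$ but does not return); carrying out this accounting — likely via a case split on where $m^*$ sits relative to $k$, $k \pm d'$ and $\partial I$ — is the genuinely technical step of the proof.
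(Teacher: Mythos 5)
Your proposal follows essentially the same route as the paper: use Lemma~\ref{lem:cutcut} to bound $\frac{d}{dt}\pi_{[m]}$, Abel-sum the telescoping expression to recast both derivatives in terms of the cumulative quantities $\pi_{[m\pm d']}$ and $\nu_{[m\pm d']}$, and then run a maximum-principle/first-violation (Gronwall) argument on the difference $\delta_m = \pi_{[m]}-\nu_{[m]}$ with boundary conditions $\delta_0=\delta_N=0$. The paper phrases this by asserting the existence of offsets $m_i$ so that both $\pi_{[i]}$ and $\nu_{[i]}$ satisfy comparable three-point inequalities and concluding that $\sup_i\delta_i$ cannot become positive; your version makes the first-violation time $t^*$, the Dini derivative, and the $\epsilon e^{Ct}$ perturbation explicit, which is somewhat cleaner. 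You also correctly flag that the genuinely delicate part is the regimes where $d_{m}<d'$ or where the window hits $\partial I$ and the Abel sum no longer collapses to the clean three-point form $2\pi_{[m]}-\pi_{[m-d']}-\pi_{[m+d']}$; the paper is equally terse on this point, so your sketch sits at the same level of rigor as the published argument.
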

\begin{proof}
Introduce the differences $\delta_i(t):=\pi_{[i]}(t)-\nu_{[i]}(t)$. It is readily seen that $\delta(0)\equiv 0$.
The arguments in the proof of Lemma \ref{lem:cutcut} readily imply that
$$
\frac{d}{dt}\pi_{[i]}(t)\le -4\sum_{j=1}^{d'}\II_{i-j+1\in I}\II_{i-j+1+d'\in I}\left(\pi_{(i-j+1)}-\pi_{(i-j+1+d')}\right),
$$
and the same equation holds with identity for distribution $\nu(t)$. There thus exist integers $m_i\ge 0$ for all $i\in [N]$ such that $i-m_i\ge 0$, $i+m_i\le N$, and furthermore for all $i\in[N]$,
$$
\begin{array}{ll}
\frac{d}{dt}\pi_{[i]}(t)&\le -4\left( 2\pi_{[i]}-\pi_{[i-m_i]}-\pi_{[i+m_i]}\right),\\
\frac{d}{dt}\nu_{[i]}(t)&=-4\left( 2\nu_{[i]}-\nu_{[i-m_i]}-\nu_{[i+m_i]}\right),\\
\end{array}
$$
so that
$$
\frac{d}{dt}\delta_i\le -4\left(2\delta_i-\delta_{i-m_i}-\delta_{i+m_i}\right).
$$
In the above, boundary conditions are given by $\delta_0=\delta_N=0$. This equation implies that necessarily, the supremum over $i\in[N]$ of $\delta_i$ cannot become positive, because its derivative is always non-positive. 
\end{proof}
By the previous lemma, an upper bound on $\pi_{[k]}(T)$ is provided by $\nu_{[k]}(T)$. However the latter quantity is simpler to analyze. It can be interpreted as $1/k$ times the average number of points of $(2/3)k_d$ random walks initialized at each point in $[k-(2/3)k_d,k]$ which fall within $[k]$ at time $T$. These walks proceed with jumps of size $\pm d'$ at rate 4, constrained to not leave interval $I=[k-(2/3)k_d,k+(2/3)k_d]$. 

For a given initial condition $i\in[k-(2/3)k_d]$, the number of sites it can visit is of the order of $(4/3)k_d/(d')=\Theta(\kappa)=\Theta(\ln(N)^{\beta})$. Recall that we have chosen $T=\ln(N)^a$ with $a>2\beta +1$. Classical results on the nearest neighbor on an interval $[M]$ state that it mixes in time of the order of $M^2$ \cite{peres2009book}. Thus each of the random walks just introduced mix in time $T$. We therefore have the following evaluation:
$$
\pi_{[k]}(T)\le \nu_{[k]}(T)\le 1- \frac{(2/3)k_d}{k}\left(1/2-o(1)  \right).
$$ 
The expected number $\dE|F(S,\overline{S})|$ is then lower-bounded by
$$
\dE|F(S,\overline{S})|\ge (2/3)k_d(1/2-o(1))=(1/3)\kappa d=[1/3-o(1)]4\gamma^{-1}d\ge \frac{1}{2\gamma}(2d).
$$
The announced result follows. 
\section{Proof of Theorem 2}
\label{sec:partial_spread}
\begin{proof}
In vector form the law $\pi(t)$ of the random walk on $G$ at time $t$ reads $\pi(t)=e^{-t L}\pi(0)$, where $L$ is the graph's Laplacian. Its entries $\pi_i(t)$ are thus linear combinations of $n$ functions of the form $e^{-\lambda_j t}$, where $\lambda_j$ are the eigenvalues of $L$, and so is the difference $\pi_i(t)-\pi_j(t)$. It can be shown by induction on $N$ that such linear combinations of $N$ distinct exponential functions are either identically zero in $t$, or admit at most $N-1$ distinct roots in $t$. Thus
 for any $i\ne j$, either $\pi_i(t)\ne \pi_j(t)$ except perhaps for finitely many $t$, or else $\pi_i(t)\equiv \pi_j(t)$ for all $t\ge 0$. 

We can thus split $\setR_+$ into finitely many intervals $I^{(1)}=[0,t_1)$, $I^{(2)}=[t_1,t_2),\ldots$, and on each interval $I^{(j)}$ determine a particular permutation $\sigma^{(j)}$ of $[N]$ such that for all $j$, and all $t\in I_j$, one has
$$
\pi_{\sigma^{(j)}(1)}(t)\ge \pi_{\sigma^{(j)}(2)}(t)\ge \cdots \ge \pi_{\sigma^{(j)}(N)}(t).
$$

For $t$ in any given interval $I^{(j)}$, we will maintain an auxiliary probability distribution on $[k+1]$, denoted $\{\nu_i(t)\}_{i\in[k+1]}$. This distribution can be interpreted as that of a random walk on a graph $G^{(j)}$ with node set $[k+1]$, obtained from $G$ as follows. We identify node $\sigma^{(j)}(i)$ in $G$ with node $i$ in $G^{(j)}$ for all $i\in [k]$, and collapse all nodes $\sigma^{(j)}(u)$, $u>k$ to form node $k+1$. All edges are then preserved, so that the adjacency matrix $A^{(j)}$ of $G^{(j)}$ is given by
$$
\begin{array}{lll}
A^{(j)}_{u,v}&=A_{\sigma^{(j)}(u),\sigma^{(j)}(v)},&u,v\in [k],\\
A^{(j)}_{u,k+1}&=\sum_{v=k+1}^n A_{\sigma^{(j)}(u),\sigma^{(j)}(v)}, &u\in [k],
\end{array}
$$
where $A$ is the adjacency matrix of $G$.
For convenience, we denote by $\pi_{(i)}(t)$ the $i$-th largest entry of distribution $\pi(t)$. Thus for $t\in I^{(j)}$, $\pi_{(i)}(t)=\pi_{\sigma^{(j)}(i)}(t)$.

The result of the theorem will then follow from the combination of two ingredients. We first show in Lemma \ref{lemme_1} below that, for all $t$, one has the following bound:
\begin{equation}\label{eq:1}
\pi_{(i)}(t)\le \nu_{i\wedge (k+1)}(t),\quad i\in[N],\; t\ge 0.
\end{equation}

We then establish in Lemma \ref{lemme_2} below that for all $j$, the second smallest eigenvalue $\lambda^{(j)}_2$ of the Laplacian of $G^{(j)}$ is lower-bounded by $\lambda_2^*$ given in \eqref{lambda_lower_bound}, where crucially $\Delta$ is the largest node degree in $G$, not in $G^{(j)}$. 

This readily implies the $L^2$ control
$$
\sum_{i\in[k+1]}\left|\nu_i(t)-\frac{1}{k+1}\right|^2\le e^{-2\lambda^*_2t}.
$$
Cauchy-Schwarz inequality then gives the following control on variation distance:
$$
\sum_{i\in [k+1]}|\nu_i(t)-1/(k+1)|\le \sqrt{k+1} e^{-\lambda^*_2 t}. 
$$
Together, these two results entail that for all $s\le k$,
\begin{equation}\label{eq:3}
\sum_{i\in [s]}\pi_{(i)}(t)\le \frac{s}{k+1}+\sqrt{k+1} e^{-\lambda^*_2 t},
\end{equation}
which is the announced result.
\end{proof}

\begin{lemma}\label{lemme_1}
The distributions $\pi(t)$, $\nu(t)$ verify bound \eqref{eq:1}.
\end{lemma}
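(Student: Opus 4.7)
The plan is to take $\nu(0)$ to be the pushforward of $\pi(0)$ under the collapse $\sigma^{(1)}$, namely $\nu_i(0):=\pi_{\sigma^{(1)}(i)}(0)$ for $i\le k$ and $\nu_{k+1}(0):=\sum_{v>k}\pi_{\sigma^{(1)}(v)}(0)$. With this choice the top-$k$ discrepancies $\delta_i(t):=\nu_i(t)-\pi_{(i)}(t)$ vanish at $t=0$, and I aim to prove the stronger statement $\delta_i(t)\ge 0$ for every $t\ge 0$ and every $i\in[k]$, from which the remaining bound $\pi_{(i)}(t)\le\nu_{k+1}(t)$ for $i>k$ is an immediate consequence of mass conservation:
$$\nu_{k+1}(t) \;=\; 1-\sum_{u\le k}\nu_u(t) \;\ge\; 1-\sum_{u\le k}\pi_{(u)}(t) \;=\; \sum_{v>k}\pi_{(v)}(t) \;\ge\; \pi_{(k+1)}(t) \;\ge\; \pi_{(i)}(t).$$

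Within a single interval $I^{(j)}$ on which $\tau:=\sigma^{(j)}$ is constant, I would differentiate directly, using $\pi_{(i)}(t)=\pi_{\tau(i)}(t)$ together with the Kolmogorov equations on $G$ and on $G^{(j)}$. Writing $m_i$ for the multiplicity of the edge $(i,k+1)$ in $G^{(j)}$, a short rearrangement gives, for $i\le k$,
$$\dot{\delta}_i \;=\; \sum_{u\le k,\,\tau(u)\sim\tau(i)}(\delta_u-\delta_i) \;+\; m_i\bigl(\nu_{k+1}-\pi_{(k+1)}\bigr) \;+\; \sum_{v>k,\,\tau(v)\sim\tau(i)}\bigl(\pi_{(k+1)}-\pi_{(v)}\bigr) \;-\; m_i\delta_i.$$
The third group of terms is non-negative because $\pi_{(v)}\le\pi_{(k+1)}$ for $v>k$ by definition of the sorted entries; and, assuming as a running invariant that $\delta_u(t)\ge 0$ for every $u\le k$, the second term is also non-negative, by the same chain of inequalities displayed above. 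Dropping these non-negative contributions yields the cooperative (Metzler) inequality
$$\dot{\delta}_i \;\ge\; \sum_{u\le k,\,\tau(u)\sim\tau(i)}\delta_u \;-\; D_i\,\delta_i,\qquad D_i:=\deg_G(\tau(i)),$$
valid at any time $t$ at which $\delta_u(t)\ge 0$ for all $u\le k$.

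To propagate non-negativity in time, I would invoke the standard quasi-monotone argument: if $t^*:=\inf\{t\ge 0:\delta_{i^*}(t)<0\text{ for some }i^*\in[k]\}$ were finite, continuity would force $\delta_{i^*}(t^*)=0$ with $\delta_u(t^*)\ge 0$ for all $u\in[k]$, and the Metzler inequality evaluated at $t^*$ would give $\dot{\delta}_{i^*}(t^*)\ge 0$, contradicting the definition of the infimum. Transitions between intervals $I^{(j)}$ happen precisely at times when two coordinates of $\pi$ cross, so the sorted values $\pi_{(i)}$ stay continuous; $\nu$ is continuous by construction; hence $\delta$ is continuous across interval boundaries, and the argument restarts on each $I^{(j+1)}$ with the appropriate graph $G^{(j+1)}$. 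The main subtlety is the circularity sitting inside the non-negativity of the crosstalk term $m_i(\nu_{k+1}-\pi_{(k+1)})$: its sign is itself a consequence of the conclusion we are trying to establish. This is what the quasi-monotone argument is designed for, since it only invokes the Metzler inequality at times $t$ at which $\delta(s)\ge 0$ has already been verified for all $s\le t$.
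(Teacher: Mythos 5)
Your differential comparison set-up and the quasi-monotone/maximum-principle framework are essentially the same as the paper's, and the derivative identity you write for $\dot{\delta}_i$ (with $i\le k$) is correct. However, the argument has a genuine sign error that leaves a real gap. Under your convention $\delta_i=\nu_i-\pi_{(i)}$, the running invariant $\delta_u\ge 0$ for all $u\le k$ says $\nu_u\ge\pi_{(u)}$, hence $\sum_{u\le k}\nu_u\ge\sum_{u\le k}\pi_{(u)}$ and therefore
$$
\nu_{k+1}(t)=1-\sum_{u\le k}\nu_u(t)\;\le\;1-\sum_{u\le k}\pi_{(u)}(t)=\sum_{v>k}\pi_{(v)}(t),
$$
which is the \emph{opposite} direction to the one your display asserts. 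So the mass-conservation chain does not deliver $\nu_{k+1}\ge\pi_{(k+1)}$; it only gives an upper bound on $\nu_{k+1}$. Consequently, the non-negativity of the crosstalk term $m_i(\nu_{k+1}-\pi_{(k+1)})$ is \emph{not} a consequence of the invariant $\delta_u\ge 0$, $u\le k$, and the quasi-monotone argument as stated cannot break the circularity you yourself flag. The same error also invalidates the claimed deduction of $\pi_{(i)}\le\nu_{k+1}$ for $i>k$ from the bounds for $i\le k$.

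The fix, which is what the paper actually does, is to enlarge the invariant to include the $(k{+}1)$-st discrepancy $\delta_{k+1}:=\nu_{k+1}-\pi_{(k+1)}$ and propagate its non-negativity by a separate argument. Unlike the $i\le k$ coordinates, $\pi_{(k+1)}$ has incoming terms from indices $>k+1$, and $\nu_{k+1}$ aggregates mass flows from all of $\{k+2,\dots,N\}$; the resulting inequality for $\dot\delta_{k+1}$ is not of pure Metzler form but carries an extra term. The paper bounds that extra term using the sortedness of $\pi$ (namely $\nu_{k+1}-\nu_j\le\delta_{k+1}$ for $j\le k$ once $\delta_j\le\delta_{k+1}$), obtaining $\dot\delta_{k+1}\le\alpha\,\delta_{k+1}$ at the putative first crossing, and closes with Gronwall's lemma. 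You need an analogue of this step: without it, the induction on $t$ never gets off the ground.
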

\begin{proof}
The bound trivially holds at $t=0$. We can establish it by induction on each interval $I^{(j)}$. Let us then consider one such interval, and assume that the property holds at its left end. For notational simplicity we will assume that $\sigma^{(j)}$ is the identity, so that on this interval $\pi_i(t)=\pi_{(i)}(t)$. Introduce the notation
$$
\delta_i(t):=\pi_i(t)-\nu_{i\wedge (k+1)}(t),\quad i\in[N].
$$
One has the following time derivatives
$$
\begin{array}{lll}
\frac{d}{dt}\pi_i&=\sum_{j\in[k], j\sim i}(\pi_j-\pi_i)+\sum_{j\notin[k],j\sim i}(\pi_j-\pi_i),&i\in[N],
\\
\frac{d}{dt}\nu_i&=\sum_{j\in [k], j\sim i}(\nu_j-\nu_i)+\sum_{j\notin [k],j\sim i}(\nu_{k+1}-\nu_i),& i\in[k],\\
\frac{d}{dt}\nu_{k+1}&=\sum_{i\notin [k]}\sum_{j\in [k],j\sim i}(\nu_j-\nu_{k+1}).&
\end{array}
$$
By the previous display one has for $i\in [k]$:
\begin{equation}\label{eq:1.1}
\frac{d}{dt}\delta_i=\sum_{j\in [n], j\sim i}(\delta_j-\delta_i).
\end{equation}
Note that, because the values $\pi_i$ are sorted, for all $j\notin [k]$, $\pi_j-\pi_{k+1}\le 0$. This together with the expression for the time derivative of $\pi_{k+1}$ yield
$$
\frac{d}{dt}\pi_{k+1}\le \sum_{j\in[k], j\sim k+1}(\pi_j-\pi_{k+1}). 
$$
Thus
\begin{equation}\label{eq:1.2}
\begin{array}{lll}
\frac{d}{dt}\delta_{k+1}&\le& \sum_{j\in[k],j\sim k+1}(\pi_j-\pi_{k+1})\\
&&-\sum_{i\notin [k]}\sum_{j\in [k],j\sim i}(\nu_j-\nu_{k+1})\\
&=&\sum_{j\in[k],j\sim k+1}(\delta_j-\delta_{k+1})\\
&&-\sum_{i\notin[k+1]}\sum_{j\in[k],j\sim i}(\nu_j-\nu_{k+1}).
\end{array}
\end{equation}
Let us argue by contradiction, and assume that there exists $t\in\setR_+$ and $i\in[N]$ for which $\delta_i(t)>0$. Let $\delta(t):=\sup_{j\in[N]}\delta_j(t)$. As the $\pi_j$ are sorted in decreasing order, one also has $\delta(t)=\sup_{j\in[k+1]}\delta_j(t)$. 

Since the $\delta_j(t)$ are linear combinations of finitely many exponentials, we can then identify an interval $J=[a,b]$ such that on $J$, for some $i\in[k+1]$, $\delta(t)\equiv \delta_i(t)$, and moreover $\delta(a)=0$, $\delta(t)>0$, $t\in(a,b]$. 

Assume that $i\in[k]$. From expression \eqref{eq:1.1}, we see that on $J$, $\frac{d}{dt}\delta=\frac{d}{dt}\delta_i\le 0$. This contradicts the fact that $\delta>0$ on $(a,b]$. 

Assume then that $i=k+1$. Then on $J$ one has, for all $j\in[k]$, as the $\pi_j$ are sorted,
$$
\nu_{k+1}=\pi_{k+1}-\delta_{k+1}\le \pi_{k+1}\le \pi_j =\nu_j+\delta_j\le \nu_j+\delta_{k+1}.
$$
Thus for all $j\in [k]$, $\nu_{k+1}-\nu_j\le \delta_{k+1}$. 
It then follows from \eqref{eq:1.2} that 
$$
\frac{d}{dt}\delta_{k+1}\le 0 + \alpha \delta_{k+1},
$$
where $\alpha=\sum_{i\notin [k+1]}\sum_{j\in[k],j\sim i}(1)$. Gronwall's lemma then implies that $\delta_{k+1}\le 0$ on $J$, a contradiction.
\end{proof}
\begin{remark}
When we move from interval $I^{(j)}$ to $I^{(j+1)}$ one can check that the meaning of distribution $\nu$ is preserved: we may change the permutation sorting the entries $\pi_i$, which results in a change in the graph used to define the evolution of $\nu$, but while the vertex to which $\nu_i$ refers may change, in that case the corresponding mass does not change. 
\end{remark}
\begin{lemma}\label{lemme_2}
Given a graph $G$ on vertex set $[N]$ with maximal degree $\Delta$ and for fixed  $k<n$,  associated isoperimetric constant $\phi_k(G)$, consider the graph $G'$ obtained by collapsing $N-k$ nodes into a single node as previously described. Then the resulting Laplacian matrix $L$ has spectral gap at least $\lambda_2\ge \lambda_2^*$.
\end{lemma}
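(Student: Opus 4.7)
The plan is to adapt the classical Cheeger inequality to $G'$, with one crucial twist to handle the collapsed super-node $k+1$, whose degree in $G'$ can be as large as $(N-k)\Delta$ and would render a naive Cheeger bound vacuous. I use the variational characterization $\lambda_2(L_{G'}) = \min_{f \perp \mathbf{1},\ f \neq 0} \langle f, L_{G'} f \rangle / \|f\|^2$. Given any $f$ with $\sum_{i \in [k+1]} f_i = 0$, I set $\mu := f_{k+1}$ and decompose $f - \mu \mathbf{1}$ into its positive and negative parts $g := (f - \mu \mathbf{1})_+$ and $h := (f - \mu \mathbf{1})_-$. The defining feature of this choice of cutoff is that $g_{k+1} = h_{k+1} = 0$, so both auxiliary functions are nonnegative and supported entirely on $[k]$.

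Next I establish the following Dirichlet-style bound: for any nonnegative $\phi$ on $[k+1]$ vanishing at $k+1$,
$$
\sum_{(i,j) \in E(G')} (\phi_i - \phi_j)^2 \ge \frac{\phi_k(G)^2}{2\Delta}\, \|\phi\|^2.
$$
The proof is the standard Cauchy--Schwarz plus coarea argument, simplified by $\phi_{k+1} = 0$. Each edge $(i, k+1)$ of $G'$ contributes only $\phi_i^2$ to $\sum (\phi_i + \phi_j)^2$, which yields $\sum (\phi_i + \phi_j)^2 \le 2 \sum_{i \in [k]} d^G_i\, \phi_i^2 \le 2\Delta \|\phi\|^2$, where $d^G_i$ is the degree of $i$ in the original graph $G$ (equal to its $G'$-degree, since edges from $i$ to $\{k+1, \ldots, N\}$ are preserved as multi-edges to the super-node). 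On the coarea side, every level set $U_t = \{i : \phi_i^2 > t\}$ lies in $[k]$ with $|U_t| \le k$, and the identity $|E_{G'}(U_t, \overline{U_t})| = |E_G(U_t, [N]\setminus U_t)|$ combined with the hypothesis gives $|E_{G'}(U_t, \overline{U_t})| \ge \phi_k(G) |U_t|$, so $\sum |\phi_i^2 - \phi_j^2| \ge \phi_k(G) \|\phi\|^2$. Combining through Cauchy--Schwarz produces the stated bound.

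Finally, I glue the two parts. An edge-by-edge check shows $\sum (g_i - g_j)^2 + \sum (h_i - h_j)^2 \le \sum (f_i - f_j)^2 = \langle f, L_{G'} f \rangle$, with the mixed-sign case using $a^2 + b^2 \le (a+b)^2$ for $a, b \ge 0$. Pythagoras gives $\|g\|^2 + \|h\|^2 = \|f - \mu \mathbf{1}\|^2 \ge \|f\|^2$, since $\sum f_i = 0$. Applying the Dirichlet bound separately to $g$ and $h$ and summing yields $\langle f, L_{G'} f \rangle \ge (\phi_k(G)^2/(2\Delta)) \|f\|^2$, whence $\lambda_2(L_{G'}) \ge \lambda_2^*$. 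I expect the main obstacle to be precisely the choice of cutoff: the standard Cheeger trick of setting $\mu$ equal to the median forces only one of $g, h$ to vanish at the super-node, and the other half then suffers from the uncontrolled $G'$-degree at $k+1$. Setting $\mu = f_{k+1}$ is the key insight that makes both parts Dirichlet-like on $[k]$ simultaneously, so that the relevant degree bound involves $\Delta$ rather than the super-node's inflated degree.
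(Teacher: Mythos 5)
Your proposal is correct, and it proves the same estimate by a genuinely different route from the paper. Both arguments hinge on the same key insight that you explicitly call out: make the test function vanish at the collapsed supernode $k+1$, so that in the Cheeger computation the relevant degree bound involves $\Delta$ (the maximum degree of the \emph{original} graph $G$) rather than the supernode's degree, which can be as large as $(N-k)\Delta$. Both then run the standard coarea/Cauchy--Schwarz machinery on a nonnegative function supported on $[k]$, using $|E_{G'}(U,\overline U)|=|E_G(U,[N]\setminus U)|\ge \phi_k(G)|U|$ for level sets $U\subseteq[k]$. Where the two proofs diverge is in \emph{how} the vanishing at $k+1$ is arranged and exploited. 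The paper takes the actual eigenvector $f$, normalizes its sign so that $f_{k+1}\le 0$, keeps only the positive part $g=f_+$ (so $g_{k+1}=0$ automatically), and then uses the eigenvector equation $Lf=\lambda_2 f$ term by term to show $\lambda_2\langle g,g\rangle\ge\langle Lg,g\rangle$, after which the Cheeger argument is applied to $g$ alone. Your proof instead bounds the Rayleigh quotient of an \emph{arbitrary} $f\perp\mathbf{1}$: you shift by $\mu=f_{k+1}$, split $f-\mu\mathbf{1}$ into positive and negative parts $g,h$ (both vanishing at $k+1$), prove a clean Dirichlet-form lemma for each, and glue via the inequality $\sum(g_i-g_j)^2+\sum(h_i-h_j)^2\le\sum(f_i-f_j)^2$ together with $\|g\|^2+\|h\|^2=\|f-\mu\mathbf{1}\|^2\ge\|f\|^2$. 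Your version is somewhat more modular (it never invokes the eigenvector equation, and the Dirichlet-form lemma is reusable), at the price of the extra gluing and Pythagoras bookkeeping; the paper's version is the classical one-sided Cheeger argument and is more concise, but it is tied to $f$ being an exact eigenvector. Both are sound, and both obtain exactly the constant $\lambda_2^*=\phi_k(G)^2/(2\Delta)$.
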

\begin{proof}
Without loss of generality we assume nodes $k+1,\ldots, N$ of $G$ have been collapsed into node $k+1$ of $G'$. Let $f$ be an eigenvector of $L$ associated with its second smallest eigenvalue $\lambda_2$. We can always choose $f$ such that $f_{k+1}\le 0$. 

Define $g_v=\max(f_v,0)$, $v\in [k+1]$, and thus $g_{k+1}=0$. Let $W=\{v\in [k+1]: f_v>0\}$. One has
$$
\begin{array}{lll}
\lambda_2\sum_{u\in W}f_u^2&=&\sum_{u\in W}(L f)_u f_u\\
&=&\sum_{u\in W}\left[d_u f_u -\sum_{v\in [k+1]}a_{uv} f_v\right] f_u\\
&=&\sum_{u\in W}\sum_{v\in [k+1]}a_{uv}[f_u -f_v]f_u\\
&=&\sum_{u\in W}\sum_{v\in W}a_{uv}(f_u-f_v)f_u\\
&&+\sum_{u\in W}\sum_{v\notin W}a_{uv}(f_u-f_v)f_u\\
&\ge& \sum_{u\in W}\sum_{v\in W}a_{uv}(f_u-f_v)f_u\\
&&+\sum_{u\in W}\sum_{v\notin W} a_{uv}f_u^2\\
&=&\langle Lg, g\rangle
\end{array}
$$
Thus
$$
\lambda_2\ge \frac{\langle Lg, g\rangle}{\langle g, g\rangle}=:K.
$$
On the other hand, 
$$
\begin{array}{lll}
\sum_{(uv)\in E}a_{uv}(g_u+g_v)^2&=&2\sum_{(uv)\in E}a_{uv}(g_u^2+g_v^2)\\
&&-\sum_{(uv)\in E}a_{uv}(g_u-g_v)^2\\
&\le& 2\sum_{v\in V}d_v g_v^2\\
&\le & 2\Delta \langle g, g\rangle,
\end{array}
$$
where we have used the fact that $g_{k+1}=0$ to upper bound each product $d_v g_v^2$ by $\Delta g_v^2$. 

By Cauchy-Schwarz inequality, 
$$
\left( \sum_{(uv)\in E}a_{uv}|g_u^2-g_v^2|   \right)^2\le \left( \sum_{(uv)\in E}a_{uv}(g_u-g_v)^2 \right)  \left( \sum_{(uv)\in E}a_{uv}(g_u+g_v)^2\right)  .
$$
Combined, these bounds give
$$
\begin{array}{lll}
K&=&\frac{\left( \sum_{(uv)\in E}a_{uv}(g_u-g_v)^2 \right)  \left( \sum_{(uv)\in E}a_{uv}(g_u+g_v)^2\right)}{\langle g, g\rangle \sum_{(uv)\in E}a_{uv}(g_u+g_v)^2}\\
&\ge & \frac{\left( \sum_{(uv)\in E}a_{uv}|g_u^2-g_v^2|   \right)^2}{2\Delta\langle g, g\rangle^2}\cdot
\end{array}
$$
Let $0=t_0<t_1\cdots<t_m$ be the distinct values taken by the $g_v$. For $i=0,\ldots,m$, let $V_i:=\{v\in V: g_v\ge t_i\}$. Thus for $i>0$, $(k+1)\notin V_i$. Let 
$$
\begin{array}{lll}
M&:=&\sum_{(uv)\in E}a_{uv}|g_u^2-g_v^2|\\
&=&\sum_{i=1}^m \sum_{(uv)\in E, g_v<g_u=t_i} a_{uv}(g_u^2-g_v^2)\\
&=&\sum_{i=1}^m \sum_{u:g_u=t_i}\sum_{v:g_v=t_j, j<i}a_{uv}(t_i^2-t_{i-1}^2+\cdots -t_{j+1}^2+t_{j+1}^2-t_j^2)\\
&=&\sum_{i=1}^m\sum_{u\in V_i}\sum_{v\notin V_i}a_{uv}(t_i^2-t_{i-1}^2)\\
&=&\sum_{i=1}^me(V_i, \overline{V}_i)(t_i^2-t_{i-1}^2)\\
&\ge& \phi_k(G)\sum_{i=1}^m|V_i|(t_i^2-t_{i-1}^2)\\
&=&\phi_k(G)\sum_{i=1}^m t_i^2(|V_i|-|V_{i+1}|)\\
&=&\phi_k(G)\langle g, g\rangle.
\end{array}
$$
Combined, these results yield 
$$
\lambda_2\ge K\ge \frac{\left(\phi_k(G)\langle g, g\rangle   \right)^2}{2\Delta\langle g, g\rangle^2}=\lambda^*_2.
$$

\end{proof}


\subparagraph*{Acknowledgements.}
The authors would like to thank George Giakkoupis for bringing the problem studied in this paper to their attention, and to acknowledge stimulating discussions on the topic with both George Giakkoupis and Marc Lelarge. 




\bibliography{bib}


\end{document}